\newcommand\ifBeamerThenElse[2]{\@ifclassloaded{beamer}{#1}{#2}}
\newtheorem{definition}{Definition}
\newtheorem{proposition}{Proposition}
\newtheorem{observation}{Observation}
\definecolor{NiceRed}{HTML}{B00008}
\definecolor{NiceGreen}{HTML}{2E7E2A}
\definecolor{NiceBlue}{HTML}{131877}
\definecolor{NiceMagenta}{HTML}{8A0087}
\definecolor{NiceOlive}{HTML}{727500}
\definecolor{NiceCyan}{HTML}{137776}
\definecolor{Starlight}{HTML}{a0a078}
\definecolor{PalePink}{HTML}{796071}
\definecolor{RoseGold}{HTML}{C29089}
\definecolor{PaleBlue}{HTML}{7D98B4}
\definecolor{PaleGreen}{HTML}{6DA478}
\renewcommand\ZZ{\mathbb{Z}}
\renewcommand\GG{\mathbb{G}}
\renewcommand\Pr{\mathrm{Pr}}
\newcommand*{\emptyCircle}{\begin{tikzpicture}[scale=0.11]\draw (0,0) circle (1);
    \end{tikzpicture}}
\newcommand*{\halfCircle}{\begin{tikzpicture}[scale=0.11]\draw (0,0) circle (1);
    \fill (0,0) -- (90:1) arc (90:90-50*3.6:1) -- cycle;
    \end{tikzpicture}}
\newcommand*{\fullCircle}{\begin{tikzpicture}[scale=0.11]\draw (0,0) circle (1);
    \fill (0,0) -- (90:1) arc (90:90-100*3.6:1) -- cycle;
    \end{tikzpicture}}
\newcommand{\bigPr}[2]{\Pr\left[\begin{array}{l}#1\end{array}
    \middle\vert\ 
    \begin{array}{l}#2\end{array}\right]
}
\newcommand{\up}[1]{^{(#1)}}
\newcommand{\down}[1]{_{#1}}
\newcommand\specialfont[1]{\textsf{#1}}
\newcommand{\subheading}[1]{\vspace{0.5 em}\noindent \textbf{#1}}
\newcommand{\missingBlocks}{m}
\newcommand{\m}{\missingBlocks}
\newcommand{\committeeSize}{n}
\newcommand{\startHeight}{\specialfont{h1}}
\newcommand{\heightOld}{\startHeight}
\newcommand\startEpoch{\specialfont{e1}}
\newcommand{\currentEpoch}{\specialfont{e2}}
\newcommand{\heightNew}{\specialfont{h2}}
\newcommand{\currentHeight}{\heightNew}
\newcommand{\quorumSize}{t}
\newcommand{\quorum}{\mathcal{Q}}
\newcommand{\groupHash}{H}
\newcommand{\step}[1]{\textcircled{#1}}
\newcommand{\apk}{\specialfont{apk}}
\newcommand{\ssk}{\specialfont{ssk}}
\newcommand{\sumKey}{\ssk}
\def\NextKeys/{\specialfont{NextK}}
\newcommand{\Nextkeys}{\specialfont{NextK}}
\newcommand{\Hdr}{\specialfont{Hdr}}
\newcommand{\newestHeader}{\Hdr}
\newcommand{\bp}{j}
\newcommand{\KeyGen}{\specialfont{KeyGen}}
\newcommand{\Sign}{\specialfont{Sign}}
\newcommand{\Verify}{\specialfont{Verify}}
\newcommand{\Aggregate}{\specialfont{Aggregate}}
\newcommand{\sk}{\specialfont{sk}}
\newcommand{\pk}{\specialfont{pk}}
\newcommand{\PK}{\specialfont{PK}}
\newcommand{\SK}{\specialfont{SK}}
\newcommand{\sig}{s}
\newcommand{\Sig}{S}
\def\VerifyAggregate/{\specialfont{Verify}}
\def\Agg/{\Aggregate/}
\def\VerifyAgg/{\VerifyAggregate/}
\def\PoP/{\specialfont{PoP}}
\def\VerifyPoP/{\specialfont{VerifyPoP}}
\def\Eval/{\Aggregate/}
\newcommand{\Update}{\specialfont{LC.Update}}
\newcommand{\VerifyState}{\specialfont{LC.Verify}}
\newcommand{\corresponds}{\specialfont{LCstate}}
\newcommand{\lcState}{d}
\newcommand{\lcStateOld}{\lcState\up{\heightOld}}
\newcommand{\sOld}{\lcStateOld}
\newcommand{\sNew}{\lcState\up{\heightNew}}
\newcommand{\validatorsAdv}{\validators_{\mathcal{A}}}
\newcommand{\validatorsThreshold}{f}
\newcommand{\FNClient}{\mathcal{F}}
\newcommand{\FullNode}{\ensuremath{F}}
\newcommand{\BCstateOld}{\BCstate\up{\heightOld}}
\newcommand{\validators}{\mathcal{V}}
\newcommand{\NbrVal}{\committeeSize}
\newcommand{\BCstate}{\mathcal{C}}
\newcommand{\advA}{\mathcal{A}}
\newcommand{\advB}{\mathcal{B}}
\newcommand{\client}{C}
\title{Practical Light Clients for Committee-Based Blockchains}
\author{
\IEEEauthorblockN{Frederik Armknecht}
\IEEEauthorblockA{Universität Mannheim\\
    \href{mailto:armknecht@uni-mannheim.de}{armknecht@uni-mannheim.de}}
\and
\IEEEauthorblockN{Ghassan Karame}
\IEEEauthorblockA{Ruhr-Universität Bochum\\
    \href{mailto:ghassan.karame@rub.de}{ghassan.karame@rub.de}}
\and
\IEEEauthorblockN{Malcom Mohamed}
\IEEEauthorblockA{Ruhr-Universität Bochum\\
    \href{mailto:malcom.mohamed@rub.de}{malcom.mohamed@rub.de}}
\and
\IEEEauthorblockN{Christiane Weis}
\IEEEauthorblockA{NEC Laboratories Europe\\
    \href{mailto:christiane.weis@neclab.eu}{christiane.weis@neclab.eu}}}
\begin{document}
\pagestyle{plain}
\maketitle

\begin{abstract}
Light clients are gaining increasing attention in the literature since they obviate the need for users to set up dedicated blockchain full nodes.
While the literature features a number of light client instantiations, most light client protocols optimize for long offline phases and implicitly assume that the block headers to be verified are signed by highly dynamic validators.

In this paper, we show that (i) most light clients are rarely offline for more than a week, and (ii) validators are unlikely to drastically change in most permissioned blockchains and in a number of permissionless blockchains, such as Cosmos and Polkadot.
Motivated by these findings, we propose a novel practical system that optimizes for such realistic assumptions and achieves minimal communication and computational costs for light clients when compared to existing protocols.
By means of a prototype implementation of our solution, we show that our protocol achieves a reduction by up to $90$ and $40000\times$ (respectively) in end-to-end latency and up to $1000$ and $10000\times$ (respectively) smaller proof size when compared to two state-of-the-art light client instantiations from the literature.
\end{abstract}

\section{Introduction}

\emph{Light clients} are programs that verify a digest of the blockchain's state with minimal storage, communication, and computation overhead. These clients are gaining increasing attention in the literature since they obviate the need for users to set up dedicated blockchain full nodes, store large amounts of data, and process all received information~\cite{sokLightClients}.

The literature features a number of proposals for light client implementations.
For instance, the Bitcoin community provides the BitcoinJ~\cite{bitcoinj}, PicoCoin~\cite{picocoin}, and Electrum~\cite{electrum} clients, implementing the so-called Simple Payment Verification (SPV).
In SPV, the light client has to verify the correctness of \emph{each} block header it receives.
This results in a linear communication complexity with respect to the number of block headers to be verified.
Other more recent light client instantiations, such as~\cite{popos,plumo,zkBridge}, move away from Bitcoin's proof of work model and into a setting where the blockchain depends on signatures from elected committees.
There, the light client proposals aim to optimize the (linear) communication complexity witnessed in the basic (\enquote{SPV-equivalent}) mode by leveraging additional trust assumptions from the system (e.g., having a direct communication link with an honest full node~\cite{popos}) or by incurring additional computational overhead on the full nodes \cite{plumo,zkBridge}.
As far as we are aware, Agrawal et al.'s PoPoS protocol \cite{popos} emerges as one of the most communication-efficient light client instantiations in the proof of stake setting requiring $O(\log{m})$ messages to verify $m$ consecutive block headers.

While these results greatly improve over the basic linear mode, we observe that most recent light client systems are designed with long offline phases in mind---trying to cater to the case where the number $m$ of consecutive block headers to be verified is very large.
In most popular cryptocurrencies, however, we observe that most light client requests are made by users who are rarely offline for more than a week. Moreover, we note that the most recent light client protocols implicitly assume that the block headers to be verified (by the light clients) are signed by continuously rotating keys.
While this assumption is true for Ethereum and Algorand (where entirely new committees are frequently sampled from a large universe of nodes), this is not the case in most permissioned blockchains, such as Quorum~\cite{quorum} or Hyperledger Fabric~\cite{fabric}, and in some permissionless committee-based blockchains such as Cosmos~\cite{cosmos} and Polkadot~\cite{polkadot}.

\subheading{Research Question and Solution.}
Motivated by these observations, we set forth in this paper to answer the following research question:
\emph{Can we design a light client protocol for committee-based blockchains that exhibits low communication and computation costs on both the light client and full nodes---without relying on additional trust assumptions?}
To address this question, we focus on realistic blockchain deployments in which (1) a light client with short offline periods must be served most often and (2) the signers of block headers do not drastically change in the common case.

First, we ground the former assumption on the current usage patterns of light clients in popular cryptocurrencies, such as Bitcoin. Namely, our experiments on the Bitcoin blockchain reveal that more than half of all light clients connections to full nodes occur within at most 24 hours.
Second, the latter assumption applies a priori to most permissioned blockchains as well as widely used permissionless blockchains.
This is because permissioned blockchains like Hyperledger Fabric and Quorum are usually specifically designed for usage with small committees and rare validator changes for efficiency.
On the other hand, our real-world evaluation of permissionless blockchains like Cosmos and Polkadot, shows that the membership changes that they exhibit are, in the common case, below a modest threshold relative to the total number ($<10\%$).

By leveraging such (realistic) assumptions, we introduce a new light protocol that achieves minimal communication and computational overhead on the light clients and the full nodes in the standard security model (e.g., without requiring additional trust assumptions).
We achieve this with the careful and novel utilization of \emph{transitive signatures} \cite{bellareNevenTransitive}, which have the useful property that long sequences of \emph{consecutive} messages (in our case, information related to the evolution of the blockchain) can be checked with verifier work independent of the number of messages (here, blockchain epochs).
To improve efficiency further, we exploit the compatibility of transitive signatures with existing multi-signature aggregation techniques (formalized in \cref{sec:transitiveMulti}), which further allows compressing signatures issued by different blockchain validators.
Overall, the light client's security is based on the signature scheme's proven security in conjunction with the majority-honesty assumption on validators---already assumed in most committee-based systems.
{{Our design is generic and is compatible with \emph{any} committee-based blockchain where the signers of
block headers do not drastically change over time.}}

\subheading{Contributions.}
In summary, we make the following contributions in this work:
\begin{description}[leftmargin=0.2cm]
\item[Gaps in existing protocols: ]We analyze the practical usage patterns of light clients and show that over $50\%$ of light client requests seen by full nodes concern offline phases of less than 2 hours. We also argue that validator sets are unlikely to change by large amounts in permissioned blockchains, and we empirically measured that if they changed at all in our observation period, then they only changed by up to $<0.6\%$ at a time in Cosmos, $0\%$ in XRP Ledger, and $<7\%$ in Polkadot. Existing light client protocols have not been designed to take such operating conditions into account (Section~\ref{sec:observations}).
    \item[Novel construct: ]We introduce the first practical light protocol that achieves $O(1)$ communication complexity and verifier work in the common case of a large static validator subset and concretely low computational load on the full nodes without additional trust assumptions.
Our design relies on the novel application of transitive signatures, which we generalize to multi-signatures for compatibility with the blockchain's distributed setting (\cref{sec:lightclient}).
\item[Prototype implementation: ]We implement a prototype of the system (which we plan to release as open-source to aid further research in this area) and evaluate its performance compared to the PoPoS protocol \cite{popos} and a design proposed for usage in Polkadot from Ciobotaru et al. \cite{cssv} (CSSV).
    We show empirically, by means of a prototype evaluation, that end-to-end latency for light client updates covering as many as $2^{16}$ epochs is over $90\times$ smaller than PoPoS and up to $41726\times$ smaller than CSSV.
At the same time, the communication complexity (constant at under 200 bytes per static-quorum period) is up to $1000\times$ lower than PoPoS and up to $10000\times$ lower than CSSV (Section~\ref{sec:evaluation}).
\end{description}

 \section{Preliminaries}

\subsection{Blockchain Basics}\label{sec:blockchains}

A \emph{blockchain data structure} is an append-only list of blocks.
A block's position in the blockchain is called its height.
A block contains a header and a list of transactions.
A transaction, abstractly, is an input command to a specified state machine.
A (committee-based) \emph{blockchain system} is a replicated state machine, maintained by a set of so-called \emph{validators}, using a blockchain data structure and a (committee-based) consensus protocol.
Consensus here specifically refers to Byzantine fault-tolerant atomic broadcast \cite{pbft,hotstuff}.
At all times, an honest (super)majority assumption is placed on the validators.

\subheading{Validator Reconfiguration, Epochs.}
The set of validator nodes may change over time.
Such changes are also called reconfigurations. Blockchain systems define an epoch length as a number of blocks such that reconfigurations are only allowed at block heights which are multiples of the epoch length.
In case of reconfiguration, a block with a special header field is output to announce the validator change.
These end-of-epoch blocks are also signed by the (old) validators before the reconfiguration takes effect starting in the following epoch.
Thus, the honest majority assumption on any particular set of validators is temporary---it only holds for a specific epoch.

\subheading{Committee-Based Blockchains.}
We speak of a \emph{committee-based} blockchain when the validators mutually know each other's identities. This is in contrast to, say, Bitcoin's protocol that supports an unknown set of nodes.
However, this covers those \emph{proof-of-stake} systems that select the participating nodes in a standard Byzantine consensus algorithm based on staked cryptocurrency.
Such systems are widely used and include Ethereum and Polkadot (that is, their \enquote{finalized prefixes} \cite{gasper}) and the Cosmos ecosystem based on Tendermint \cite{tendermint}.
Committe-based blockchains also include \emph{permissioned} blockchains like Hyperledger Fabric \cite{fabric}, where the assignment (and reconfigurations) of the set of validator nodes is decided by an external trusted party or procedure. In this paper, we treat such permissioned blockchains and proof-of-stake blockchains in parallel since both are committee-based.

\subheading{Light Clients, Block Verification.}
All transactions in blocks from the initial block until a given block uniquely determine the blockchain system's \emph{state} at the given block.
\emph{Light clients} are programs that verify a digest of the blockchain's state with minimal storage and computation overhead.
Using the verified digest, they might then subsequently verify, say, parts of the state itself or verify that a particular transaction is included in a given block.
It is standard that each block header includes the digest of the system state at that block.
Thus, the light client's task is verifying the block headers.
A block header is considered valid if and only if there exist valid digital signatures over the block header from a sufficient number $\quorumSize$ of validators from the current validator set. 

\newcommand{\setupBox}[2]{\fbox{\begin{minipage}[t]{4.4cm}\textbf{#1}\\#2\end{minipage}}}
\begin{figure*}[!t]
\begin{center}
\begin{tikzpicture}
    \node (V) at (0,0) {\setupBox{Validators (/committee)}{run blockchain consensus to decide new blocks.}};
    \node (F) at (6,0) {\setupBox{Full nodes}{receive blockchain data in real time; forward messages.}};
    \node (L) at (12,0) {\setupBox{Light clients}{query for blockchain data sporadically; resource constrained.}};
    
    \draw[<->] (V) -- (F) ;
    \draw[<->] (F) -- ([yshift=2\baselineskip]L) ;
\end{tikzpicture}
\end{center}
\caption{Overview of our setup (see \cref{sec:blockchains}).
\label{fig:setup}
}
\end{figure*}
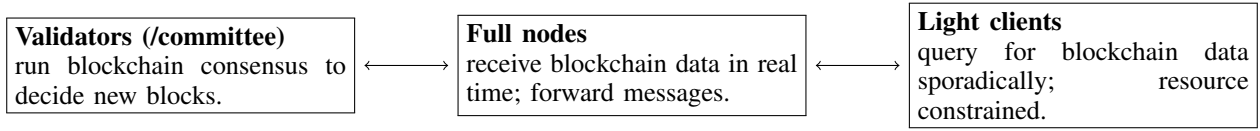 To receive updates about the blockchain, light clients do \emph{not} directly communicate with validators.
Rather, information about the blockchain's state is forwarded to them by \emph{full nodes}.
Full nodes are highly available nodes that receive blocks from validators (or, via peer-to-peer messaging, from each other) and verify and store them.
Unlike full nodes, light clients are generally not always online---they query full nodes on demand.
Figure \cref{fig:setup} illustrates the communication links in this setup.

For further context about light clients we refer to \cite{sokLightClients}.

\subsection{Cryptographic Building Blocks}\label{sec:crypto}

We recall that standard digital signature schemes are defined by three algorithms \KeyGen, \Sign, \Verify{} and are considered secure if they guarantee existential unforgeability under chosen message attacks (euf-cma).
Aggregate signatures additionally have an algorithm \Aggregate{} to combine signatures, and \Verify{} is modified to be able to check a single signature for multiple messages and multiple sets of public keys.

\subheading{BLS Signatures.}
Boneh, Lynn, and Shacham's aggregate signature scheme (BLS) \cite{bls04} is defined over three elliptic curve groups $\GG_1$, $\GG_2$, $\GG_T$ of the same large prime order $p$, generators $g\in\GG_1, \hat{g}\in\GG_2$, a hash function $\groupHash : \{0,1\}^*\rightarrow\GG_2$ and a pairing function, that is, a bilinear map $e: \GG_1 \times \GG_2 \rightarrow \GG_T$.
\KeyGen{} samples $\sk$ from $\ZZ_p$ and sets $\pk = g^{\sk}$, \Sign{} on message $x$ outputs $\sig = \groupHash(x)^{\sk}$ and $\Aggregate(\sig_1, \dots, \sig_n)$ computes $\prod_{i \in [n]} \sig_i$.
$\Verify((\PK_{1},\allowbreak{} \dots,\allowbreak{} \PK_{n}),\allowbreak{} (x_1,\allowbreak{} \dots,\allowbreak{} x_n),\allowbreak{} \sig)$ checks $e(g,\allowbreak{} \sig) = \prod_{i \in [n]} e(\prod_{\pk \in \PK_{i}} \pk,\allowbreak{} \groupHash(x_i))$.

\subheading{Transitive Signatures.}
Transitive signatures \cite{transitiveSignatures}, intuitively, are used to sign edges of graphs and allow aggregating signatures from a single signer along paths.
That is, a signature on $(x,y)$ and a signature on $(y,z)$ can be combined into a valid signature for $(x,z)$, which can be verified without regard to $y$.
The GapTS-2 scheme due to Bellare and Neven \cite[Section 5.C]{bellareNevenTransitive} has the same \KeyGen{} and \Aggregate{} procedures as BLS.
However, \Sign{} computes signatures differently, namely as $\sig = (\groupHash(y)/\groupHash(x))^{\sk}$, while \Verify{} on input $(x,z)$ checks $e(g, \sig) = e(\pk, \groupHash(z)/\groupHash(x))$.
Notably, verification complexity is independent of the path length.

\subsection{Notations}
In the sequel, we denote by $[n]$ the set $\{1, \dots, n\}$ and $[n, m]$ denotes $\{n, \dots, m\}$.
In this paper, objects related to the $i$th block height or epoch are often indexed as, for example, $x\up{i}$.

We further formalize the description of light clients given in \cref{sec:blockchains}.
We denote by $\committeeSize$ the (maximum) validator committee size and write the set of validators in epoch $i$ as $\validators\up{i}$.
Regarding light client updates, we use the variable $\missingBlocks$ for the number of epochs boundaries---that is, possible validator set changes---between the client's starting block height and the most current height.

 \section{Related Work}\label{sec:related}

\newcommand{\secondColumn}{0.4cm}
\newcommand{\thirdColumn}{0.9cm}
\newcommand{\fourthColumn}{1.45cm}
\newcommand{\fifthColumn}{1.45cm}
\newcommand{\sixthColumn}{2.65cm}
\newcommand{\seventhColumn}{0.8cm}
\newcommand{\eighthColumn}{1.4cm}
\begin{table*}[t]
    \centering
    \caption{
        Comparison of light client protocols.
        SV stands for baseline sequential verification.
The best case for PoPoS is that all queried full nodes agree.
        The best case for our scheme is the existence of a large static validator subset throughout all $\missingBlocks$ epochs.
        $T$ is a parameter for trust-based and proof-based skipping protocols.
        $\missingBlocks$ is the number of epoch blocks the light client missed since last being online.
        $\committeeSize$ is the (maximum) size of a validator committee.
        \label{tab:related}
    }
\begin{tabularx}{\textwidth}{Xp{\secondColumn}p{\thirdColumn}p{\fourthColumn}p{\fifthColumn}p{\sixthColumn}p{\seventhColumn}p{\eighthColumn}}
        \toprule
                                                                                                             & SV                             & PoPoS \cite{popos}                    & Tendermint \cite{tendermintLC}   & Algorand \cite{algoStateProofs}        & Plumo \cite{plumo}, zkBridge (batched) \cite{zkBridge} & CSSV \cite{cssv} & Our solution, Sec. \ref{sec:lightclient} \\
        \midrule
        \parbox{\linewidth}{\hangindent1em\hangafter1 Best-case communication overhead ($\Omega$)}           & --                             & $1$                                   & --                               & --                                     & --                                                     & --               & $1$ \\
        \parbox{\linewidth}{\hangindent1em\hangafter1 Worst-case communication overhead ($O$)}               & $\missingBlocks\committeeSize$ & $\log(\missingBlocks)$                & $\missingBlocks\committeeSize/T$ & $\missingBlocks\log(\committeeSize)/T$ & $\missingBlocks/T$                                     & $\missingBlocks$ & $\missingBlocks$ \\
\parbox{\linewidth}{\hangindent1em\hangafter1 Secure if all full nodes are malicious}                  & \fullCircle                    & \emptyCircle                          & \fullCircle                      & \fullCircle                            & \fullCircle                                            & \fullCircle      & \fullCircle \\
\parbox{\linewidth}{\hangindent1em\hangafter1 Epochs for which validators must \emph{stay} majority-honest}   & 1                              & 1                                     & $T$                              & $T$                                    & 1                                                      & 1                & 1 \\
        \parbox{\linewidth}{\hangindent1em\hangafter1 Low computational load on full nodes / provers}        & \fullCircle                    & \fullCircle                           & \fullCircle                      & \fullCircle                            & \emptyCircle                                           & \halfCircle      & \fullCircle \\
\bottomrule
    \end{tabularx}
\end{table*} 
We now overview existing light client approaches for committee-based blockchains.
\cref{tab:related} summarizes the core strengths and drawbacks.
We remark that none of the existing protocols cater to the best-case scenario where the validator set is relatively static.

\subheading{Sequential Verification (SV).}
We refer to the following standard baseline approach to light client verification as sequential verification (SV).
At the start, the light client is initialized with the first committee's public keys.
To update to the current block header within the same epoch, the full node simply sends the block header and enough validators' signatures to show that the block is valid.
To update across epochs, the full node sends all end-of-epoch block headers with validator set changes.
The light client then checks them in sequence, verifying the first one with the known keys and verifying each subsequent one with the public keys taken from the last one.
Finally, the present epoch's public keys are used to verify the most recent header.
{ Sequential verification schemes are reminiscent of Bitcoin's SPV, though in the committee-based setting. SV schemes are supported, for example, by Ethereum (implemented in the Helios client \cite{helios}) and Polkadot (called \enquote{warp sync}).}

\subheading{PoPoS.}
The proofs of proof-of-stake (PoPoS) of \cite{popos} optimizes the SV approach as follows: instead of sequentially checking the validators' signatures, the light client connects to multiple full nodes, at least one of which is assumed to be honest, and informs them of its local height $\startHeight$.
Then, each full node directly sends the most recent header.
In the optimistic case, they all agree, and the update finishes immediately.
Otherwise, if there is disagreement, the light client starts an interactive bisection protocol to identify the honest full node.
Downsides to this approach are 1) that without an honest full node, the light client might accept a false state, 2) that even if there is an honest full node, a network-level adversary might prevent the light client from receiving the honest node's responses and 3) the added latency of $O(\log(\missingBlocks))$ round-trip times due to the protocol's interactivity.

\subheading{Skipping Optimizations.}
Several proposals improve the SV approach via mechanisms that allow verifying multiple steps at once or \enquote{skipping ahead}.
Examples are Tendermint~\cite{tendermintLC}, Algorand~\cite[Section 8]{algoStateProofs} and Plumo~\cite{plumo}.
Each skipping protocol introduces a parameter, which we uniformly denote as $T$, for the number of steps that can be skipped.

The Tendermint design \cite{tendermintLC} is based on the assumption that honest validators, even after removal from the committee, must continue to sign the next $T$ end-of-epoch blocks and stay honest during this so-called trusting period.
Thus, the Tendermint light client starting at height $\startHeight$ stores, in addition to the validators' public keys, the height $h'<\startHeight$ when this set of validators was appointed\footnote{We simplify for ease of explanation. In \cite{tendermintLC}, the client stores the inauguration height for each validator individually rather than for the committee as a whole.}.
With this information, it is possible to quickly verify the epoch block just before $h'+T$ to obtain a new set of keys.
Iterating, it is only necessary to check every $T$th block to reach the latest epoch, at which point the latest block is verified directly.

Algorand's light client design~\cite[Section 8]{algoStateProofs} follows a similar principle.
There, every $T$th block contains a hash\footnote{Let it suffice to say that the specific use of (Merkle tree) hashes serves as a means to reduce the communication complexity \emph{per update} from $O(\committeeSize)$ to $O(\log(\committeeSize))$.} of the list of public keys of a subset of the current validators.
This subset is responsible for signing the next $T$th block, that is, the subset is implicitly trusted for $T$ blocks in the future.
This kind of trust assumption needed for Tendermint and Algorand might be backed by proof of stake and the threat of financial punishments for detectable misbehavior.

\subheading{Cryptographic Proofs.}
Plumo \cite{plumo} proposes the reliance on a cryptographic proof for $T$ steps of sequential verification, implementing a proof-based version of \enquote{skipping ahead}.
This approach is secure even if the trusting period is short, e.g., only one epoch. However, it comes with the drawbacks of 1) significant computational cost for proof creation and 2) added latency, as a new proof can only be created every $T$ blocks.

Finally, zkBridge~\cite{zkBridge} and CSSV~\cite{cssv} aim to reduce the per-step cost of sequential verification.
Both proposals do not introduce additional trusted parties.
In short, each step of the sequential verification is replaced with a cryptographic proof that the full node sends to the light client in place of the validators' $O(\committeeSize)$ signatures.
Specifically, zkBridge proposes using a general-purpose cryptographic proof, whereas CSSV presents an apparently more efficient scheme that is fine-tuned to systems where validators use BLS signatures over the elliptic curve BLS12-377.
In the scenario where a light client returns after an offline phase, $\missingBlocks$ proofs are needed.
As $\missingBlocks$ increases, this still has significant overall computational costs for proof generation and verification\footnote{\cite{zkBridge} also supports batching, in which case only $\missingBlocks/T$ proofs are needed (as in the \enquote{skipping} optimizations).
As per the evaluation presented in~\cite{zkBridge}, generating a proof that covers $T = 32$ blocks at once with $\committeeSize = 128$ ECDSA signatures per block still takes 18 seconds on powerful hardware.}.
 \section{Problem Statement}

In this section, we motivate our research problem and outline our security model.

\subsection{Motivation}\label{sec:observations}

We formulate our research problem based on the following observations of current real-world deployments.

\begin{observation}\label{obs:m}
Existing light client designs often \emph{choose between} light client efficiency (1) after long offline phases ($\missingBlocks \rightarrow \infty$) or (2) for frequent small-$\missingBlocks$ updates.
Empirically, in the most popular cryptocurrency with widely used light clients, the majority of requests \emph{that full nodes receive} correspond to small values of $\missingBlocks$ equivalent to under 2 hours of missed blocks.
\end{observation}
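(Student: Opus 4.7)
The observation is really two claims stapled together: a structural statement about existing light client designs, and an empirical statement about Bitcoin usage. I would establish them separately, since the first is an analytic read-off from prior constructions while the second requires a measurement study.

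For the structural half, the plan is to argue directly from \cref{tab:related} together with the protocol descriptions in \cref{sec:related}. I would catalogue each scheme's cost as a function of $\missingBlocks$ and show that every design with sublinear worst-case complexity pays an irreducible overhead that does not vanish as $\missingBlocks \to 1$: PoPoS requires at least one round-trip in the optimistic case and an $O(\log\missingBlocks)$-round bisection under disagreement; the skipping protocols of Tendermint and Algorand operate at the granularity of a trusting period $T$ and therefore cannot react faster than that cadence; proof-based approaches (Plumo, zkBridge, CSSV) incur substantial prover latency even for a single epoch. Conversely, SV is cheap for small $\missingBlocks$ but scales linearly. The formal content reduces to a per-scheme cost tabulation that can be read off from each cited construction and summarised as the dichotomy claimed.

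For the empirical half, my approach is a passive measurement study. I would deploy Bitcoin full nodes configured to log every incoming message that corresponds to a light-client-style synchronization request, principally \texttt{GetHeaders} and the BIP-157/158 \texttt{getcfilters}/\texttt{getcfheaders} messages. Each such request carries the hash of the block the requester already holds, from which I recover the requester's starting height $\startHeight$; subtracting from the current best height gives $\missingBlocks$ in blocks, and Bitcoin's $\sim$10-minute block time converts this to a wall-clock offline duration. Running the logger for a multi-week window and plotting the empirical CDF of $\missingBlocks$ then lets me read off the fraction of requests with $\missingBlocks$ below the $\approx 12$-block threshold that corresponds to 2 hours.

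The main obstacles here are methodological rather than mathematical. First, \texttt{GetHeaders} is also used in routine full-node-to-full-node gossip, so I must separate light-client traffic from peer traffic; a reasonable classifier uses the asymmetry that light clients request headers but essentially never serve them back, combined with user-agent strings where they are offered honestly. Second, one vantage point is not representative, so I would run loggers in several geographic regions and autonomous systems, report per-vantage CDFs as well as the pooled one, and use bootstrap confidence intervals over peers to ensure that the ``majority under 2 hours'' conclusion is not driven by a handful of heavy-tailed clients. The deliverable supporting the observation is then the CDF together with the concrete fraction in the sub-2-hour bin; the same dataset is reused in later sections to justify the static-committee assumption on which our construction builds.
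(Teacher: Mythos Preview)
Your structural half matches the paper's reasoning closely; the paper likewise walks through Algorand/Plumo (benefit only at granularity $T$, hence no gain for $\missingBlocks<T$) versus zkBridge/CSSV (per-epoch proofs, hence linear in $\missingBlocks$) to exhibit the dichotomy.

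On the empirical half your plan is sound but differs from what the paper actually does. The paper runs a \emph{single} Bitcoin full node for two weeks and identifies light clients via the BIP-37 pair \texttt{version} + \texttt{filterload}: the \texttt{version} handshake message already carries the peer's best height directly (no hash-to-height lookup needed), and the presence of a \texttt{filterload} is taken as the light-client classifier, since only SPV clients install bloom filters. This yielded 19458 samples and the CDF in \cref{fig:m}. Your proposal instead instruments \texttt{GetHeaders} and the BIP-157/158 filter messages, infers $\startHeight$ from the block locator, and adds multi-vantage deployment plus a behavioural classifier and bootstrap intervals. Your route is methodologically more careful---it guards against single-vantage bias and does not rely on the increasingly-deprecated BIP-37 path---but it is also heavier: you must solve the full-node-vs-light-client disambiguation problem that the paper sidesteps entirely by keying on \texttt{filterload}. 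Either methodology supports the claimed conclusion.

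One small slip: your final sentence says the same dataset will be reused to justify the static-committee assumption. It cannot---Bitcoin has no committee. The paper gathers that evidence separately from Cosmos, Polkadot, and XRP Ledger (\cref{fig:committees}).
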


\begin{figure}
    \centering
\includegraphics[width=0.87\linewidth]{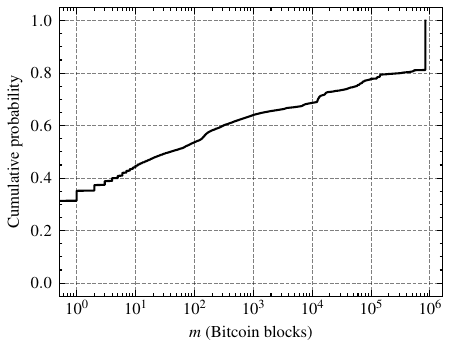}
\caption{Empirical cumulative distribution function of measurements of $m$ in Bitcoin. {We include the data used to create this plot in Table~\ref{tab:cum} in the Appendix.} \label{fig:m}}
\end{figure}

 \begin{figure*}[tbp]
    \centering
    \begin{subfigure}{0.32\textwidth}
        \includegraphics[width=\linewidth]{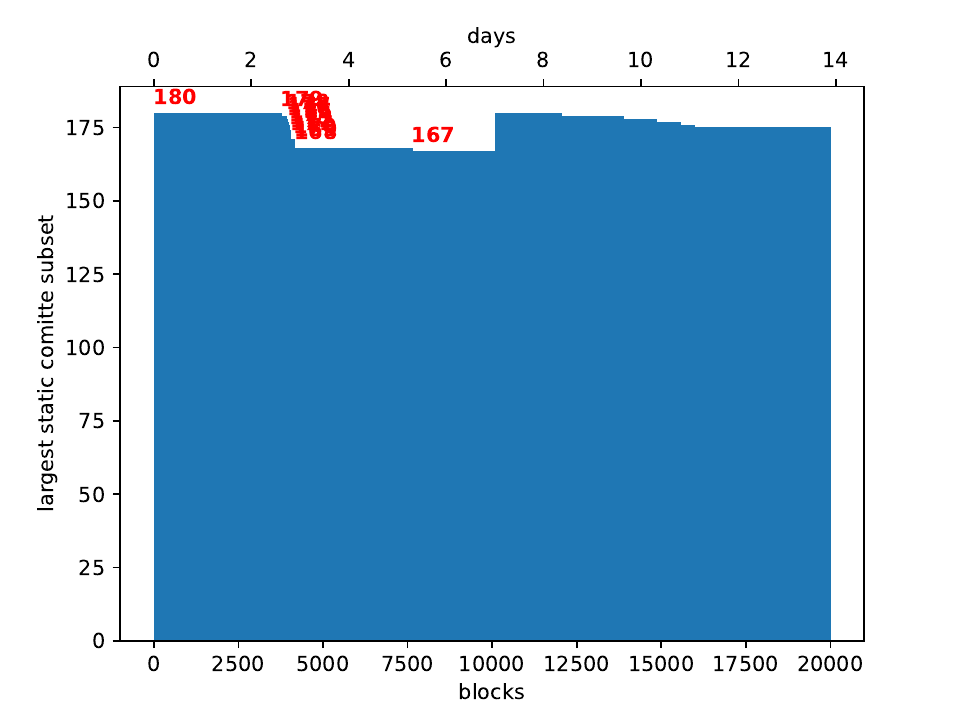}
         \caption{Cosmos, epoch length 1 block.}
\end{subfigure}
\begin{subfigure}{0.32\textwidth}
        \includegraphics[width=\linewidth]{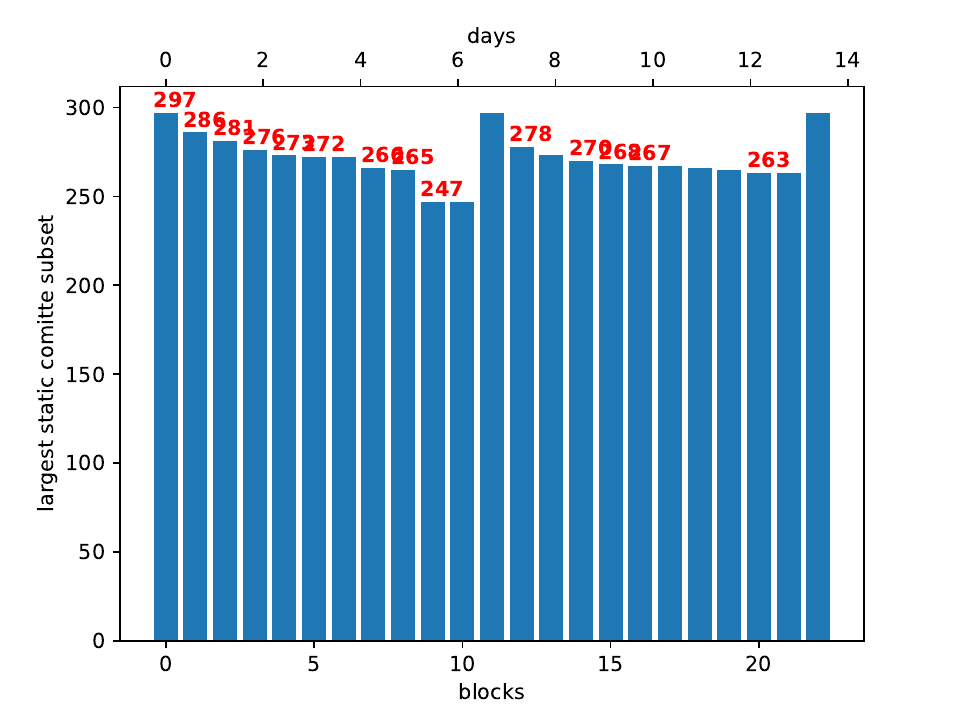}
         \caption{Polkadot, epoch length 1 day.}
\end{subfigure}
    \begin{subfigure}{0.33\textwidth}
        \includegraphics[width=\linewidth]{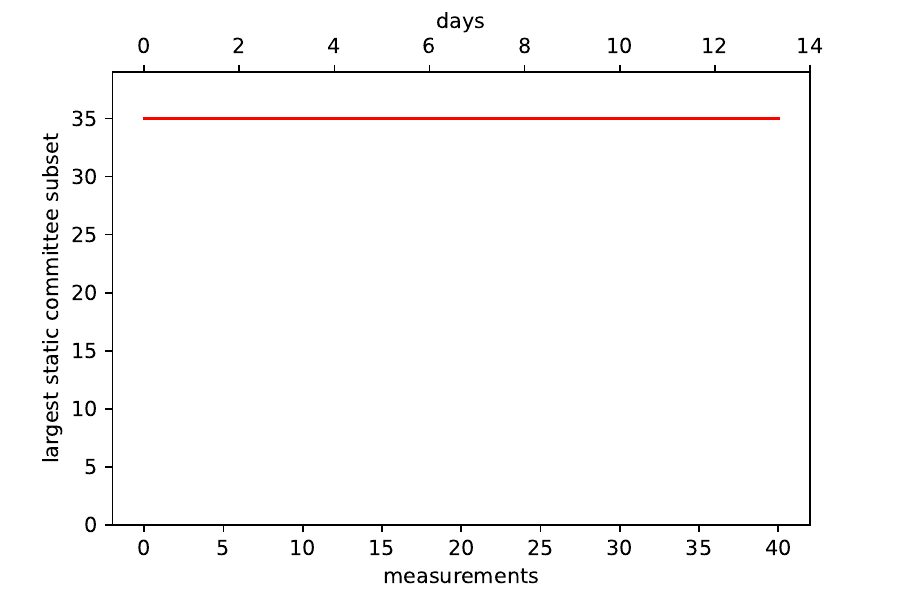}
         \caption{XRP Ledger, epoch length unspecified.}
\end{subfigure}
    \caption{Maximum size of static validator subsets over a two-week period. Absolute values are indicated in red. {We include the data used to create this plot in Table~\ref{tab:quorumsize} in the Appendix.}
    \label{fig:committees}}
\end{figure*}

First, we note that the primary use cases of current light client designs differ.
Some designs, such as Algorand and Plumo, efficiently allow clients to sync after long offline phases.
As such, they particularly cater to initial bootstrapping, where a light client only has the initial blockchain state and must synchronize up to the present. However, they provide no benefit for \enquote{normal daily usage}, that is, for clients updating from starting heights less than $T$ behind the present, since proofs are only created after every $T$th epoch.
On the other hand, approaches such as zkBridge and CSSV purely focus on the case $\missingBlocks = 1$.
They succeed in reducing the effect of $\committeeSize$, which is desirable in the small-$\missingBlocks$ regime, but suffer linear complexity when considering longer updates.
(This refers to zkBridge without batching---with batching, the previous consideration applies since one would then have to wait until a batch is full before experiencing a benefit.)

The second part of \cref{obs:m} stems from the following experiment which we conducted on Bitcoin to measure realistic absolute values of $\missingBlocks$ that full nodes encounter.
We opted to use Bitcoin since it is the most popular blockchain and was the first blockchain to support lightweight clients.
{{That is, even though Bitcoin is \emph{not} a committee-based blockchain, we do not expect drastically different lightweight client usage patterns compared to (committee-based) cryptocurrencies such as Cosmos.  Namely, we argue that lightweight client usage patterns are largely orthogonal of the underlying cryptocurrency and are instead driven by human behavior. For example, the way users interact with online banking applications is not expected to vary significantly between two comparable commercial banks. Furthermore, Bitcoin provides access to a much larger dataset, as users in ecosystems like Ethereum, Cosmos, or Polkadot more frequently connect to blockchains through centralized providers that act as trusted intermediaries. 
This prohibits the data gathering described next\footnote{Indeed, we attempted a similar experiment on a committee-based chain, Polkadot, but could not observe any light client request over the measurement period from our public node.}.}}
To this end, we set up a Bitcoin full node for a period of 2 weeks and measured when the incoming light clients were last online, based on their so-called \texttt{version} and \texttt{filterload} messages which, respectively, reveal their last block height and identify them as light clients.
Over the measurement period, 19458 samples were obtained.
As shown in~\cref{fig:m}, the results indicate that over a third of the light client requests queried for $\missingBlocks = 1$ block, that is, around 10 minutes in Bitcoin, and more than half asked for updates regarding $\missingBlocks = 12$, around 2 hours.

\begin{observation}\label{obs:trust}
Existing light client designs achieve improvements over the $\m \NbrVal$ communication overhead of the basic sequential verification (SV) either by exploiting additional trust assumptions or by incurring considerable computational load on the full nodes.
\end{observation}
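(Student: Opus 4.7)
The plan is to justify Observation~\ref{obs:trust} by a case analysis over the existing committee-based light client designs surveyed in Section~\ref{sec:related} (equivalently, the columns of Table~\ref{tab:related}). For each design that improves over the $\m \NbrVal$ communication bound of SV, I would exhibit either (i) a trust assumption strictly stronger than the baseline honest-supermajority assumption that SV places on the validators of the current epoch, or (ii) a super-verification computational cost incurred by the full node (or an auxiliary prover). Since Table~\ref{tab:related} is exhaustive for the designs we consider, establishing the claim for each column yields the observation.

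First I would partition the designs into a \emph{trust-based} bucket (PoPoS, Tendermint, Algorand) and a \emph{proof-based} bucket (Plumo, zkBridge, CSSV). For the trust-based bucket, I would argue each case as follows. PoPoS beats $\m \NbrVal$ only under the assumption that at least one of the full nodes contacted by the light client is honest \emph{and} network-reachable; this is a trust assumption on the full nodes, which are entities disjoint from validators in our setup (\cref{fig:setup}), and hence is additional to the per-epoch honest-supermajority assumption that SV makes. Tendermint and Algorand both assume a trusting period of length $T$, i.e., that a given set of validators remains honest for $T$ consecutive epochs rather than only for the single epoch in which it is active; the fourth row of Table~\ref{tab:related} makes this precise and already shows this is strictly stronger than what SV needs.

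For the proof-based bucket, I would point out that no new trust assumption is imposed, so the cost must land on the full nodes. Plumo, zkBridge, and CSSV all replace each sequential-verification step with a SNARK-style proof over the BLS signature verification circuit; the prover's work is several orders of magnitude above the per-signature elliptic-curve computation a full node already performs in SV. To make this concrete rather than asymptotic, I would cite the numbers already surfaced in the paper, in particular the zkBridge benchmark of roughly $18$ seconds per batch of $32$ blocks on powerful hardware, and note that CSSV, while more efficient than a general-purpose SNARK, is still tied to BLS12-377 heavy machinery and is marked \halfCircle{} in Table~\ref{tab:related}.

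The main obstacle is not a mathematical difficulty but a definitional one: the observation is only as sharp as one's formalization of \enquote{additional trust assumption} and \enquote{considerable computational load}. The hard part will be to state these two predicates precisely enough that the case analysis is unambiguous and that no designer could reasonably claim a loophole, while still keeping them broad enough to cover future variants beyond those in Table~\ref{tab:related}. A conservative route is to define \enquote{additional trust} as any honest-behavior hypothesis on parties or time windows beyond the single-epoch validator supermajority required by SV, and to define \enquote{considerable computational load} as requiring asymptotically more than the $\Theta(\committeeSize)$ group operations per epoch that a full node already performs for signature verification; under these definitions, the case analysis above closes the argument.
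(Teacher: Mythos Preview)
Your proposal is correct and follows essentially the same route as the paper: a case analysis over the surveyed designs, splitting them into a trust-based group (PoPoS via the honest-full-node assumption; Tendermint and Algorand via the $T$-epoch trusting period) and a proof-based group (Plumo, zkBridge, CSSV via expensive cryptographic proof generation). The paper's own justification is slightly less formal---it does not attempt to pin down \enquote{additional trust assumption} or \enquote{considerable computational load} as precisely as you do---but the structure and the per-design arguments are the same.
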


As laid out in \cref{sec:related}, the surveyed designs all either (1) rely on honest full nodes for their security \cite{popos}, (2) require validators to remain honest for longer periods \cite{tendermintLC,algoStateProofs} or (3) introduce high computational costs for cryptographic proof generation \cite{plumo,zkBridge,cssv}.
When trying to use the designs in systems other than those for which they were originally developed, it might be difficult to determine whether the required trust assumptions are justified in that system or whether the proving costs are acceptable.
For example, PoPoS \cite{popos} works under the assumption that a light client has a direct communication link to at least one honest full node.
This assumption might be acceptable in some situations, but one might conceive of cases where it is not desirable.
If, say, a mobile cryptocurrency wallet application was based on PoPoS, a network-level adversary might violate the assumption by dropping the honest node's messages and might cause the light client to accept a false blockchain state---the wallet would display the wrong account balance or display false transactions\footnote{As mentioned in \cite{popos}, such scenarios can be mitigated---though this might come with additional restrictions like added latency for a \enquote{dispute period} in which light clients will remain open to receiving contradicting claims.}.

Regarding proof creation costs, we remark that the choice of the underlying cryptographic schemes plays an important role.
For instance, Plumo~\cite{plumo} relies on validators using a specific signature scheme and hash function when they sign end-of-epoch blocks.
zkBridge~\cite{zkBridge} relies on a more costly EdDSA signature scheme (even when using the new cryptographic proof system that optimizes for low overhead) when compared to CSSV~\cite{cssv}.
Indeed, CSSV is fine-tuned to BLS signatures over a special elliptic curve and lets full nodes create proofs within hundreds of milliseconds---an improvement of two orders of magnitude---while the overall protocol structure is the same as zkBridge.

\begin{observation}\label{obs:committees}
When considering existing designs that do not require honest full nodes, we note that the most costly aspect of the verification logic of such approaches is handling arbitrary validator set changes.
However, we observe that in some widely used blockchains, large subsets of validators often remain across multiple epochs, and the relative size of validator changes is often less than 7\%.
\end{observation}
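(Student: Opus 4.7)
The plan is to justify the two claims of the observation separately. For the cost claim, I would revisit each design from \cref{sec:related} that operates without an honest-full-node assumption---namely SV, Plumo, zkBridge and CSSV---and identify the per-epoch component of its verification cost. In every case this component exists precisely to re-establish trust in a possibly new committee: SV re-verifies signatures under a fresh key set of size $\committeeSize$; Plumo, zkBridge and CSSV each invoke a circuit whose constraints are dominated by signature checks against a new public-key vector that the prover must commit to. Were consecutive committees guaranteed identical, the work of re-ingesting a new key set and re-hashing it into a fresh digest would be redundant, which pinpoints the validator-reconfiguration step as the cost driver.

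For the empirical claim, I would collect the validator committees of Cosmos, Polkadot, and XRP Ledger at every possible reconfiguration point over a two-week window by scraping public block-explorer APIs. I would then, for each sliding sub-window, compute the maximum cardinality subset of validators that appears in \emph{every} committee within that sub-window---i.e.\ the largest static subset surviving through the window---and plot it both in absolute terms and as a fraction of the committee size, as done in \cref{fig:committees}. One minus this fraction upper-bounds the relative size of membership changes within the window, which should come out below roughly $7\%$ for each system and $0\%$ for XRP Ledger.

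The main obstacle is that the three target systems differ fundamentally in when reconfigurations are allowed: Cosmos permits per-block changes, Polkadot uses daily eras on its finalized prefix, and XRP Ledger has no formal epoch concept but publishes unique node lists. I would address this by reporting each chain on its own natural granularity, clearly labeling the epoch length in the subfigures, and cross-validating between multiple block-explorer sources to mitigate data-provenance errors. A secondary concern is that a single two-week window could be anomalous; I would sanity-check this by repeating the analysis on a non-overlapping reference window and verifying that the maximum-static-subset curve is qualitatively similar.
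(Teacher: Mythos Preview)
Your proposal is correct and matches the paper's own justification closely: the paper, too, (i) points to SV, Plumo, zkBridge and CSSV as the non-honest-full-node designs and argues that their per-epoch cost stems from re-processing $\committeeSize$ public keys even when the committee has not changed, and (ii) gathers validator sets for Cosmos, Polkadot and XRP Ledger from public node endpoints over a two-week window, computes the largest surviving subset across epochs, and reads off the $<7\%$ bound from the epoch-to-epoch deltas. Your additional robustness checks (multiple explorers, a second non-overlapping window) go slightly beyond what the paper reports but do not change the approach.
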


The first part of \cref{obs:committees} specifically regards the basic SV and the cryptography-based improvements to SV of Plumo, zkBridge, and CSSV.
We point out that in the case that validator changes are rare, a naive implementation of SV would lead to redundantly transmitting (or, at the very least, performing a verifier check over) $\committeeSize$ public keys for each epoch.
A more efficient design would avoid this and, where possible, batch verify signatures with respect to the same public keys since that could be faster.
Plumo, zkBridge, and CSSV are constructed to provide cryptographic proofs for the standard SV logic for arbitrary validator changes, but they do not seem to take such opportunities into account.

For the second part of this observation, we again point to an empirical analysis we conducted.
Over the course of two weeks, we obtained the validator sets from public blockchain node endpoints for three widely used blockchain systems: Cosmos, Polkadot, and XRP Ledger.
For each epoch, we then computed the size of the largest subset of the epoch's validators, which was also in the following epoch's validator set.
This is shown in \cref{fig:committees}.
Notice how for Cosmos and Polkadot, subsets of 167 and 247 validators are present in \emph{all} observed epoch's validator sets.
These are large majorities of the at most 180 and 300 total validators of Cosmos and Polkadot in the measured period.
The largest relative change from one epoch to the next was under $7\%$.
Further, as seen in \cref{fig:committees}, in XRP Ledger, \emph{all} validators remained in the validator set across the measurement period\footnote{In the particular case of XRP Ledger, nodes choose their own validator set. Our experiments on 5 randomly selected nodes show that nodes typically rely on a static set comprising, on average, 35 validators.}.
For permissioned systems like Hyperledger Fabric \cite{fabric} and Quorum, where nodes are controlled by a few parties, we expect validator changes to be rare (similar to XRP Ledger) since reconfigurations (changes in the validator set) tend to be very costly in permissioned settings.

\subheading{Research Question.}
We consider the aforementioned observations in conjunction.
For instance, \cref{obs:m} and \cref{obs:committees} suggest that full nodes are likely to frequently receive small-$\missingBlocks$ light client update requests---where the update heights likely fall within static validator periods.
This leads us to the following research question:

{\it Is it possible to design a light client system for committee-based blockchains that (i) has a low concrete communication cost under realistic conditions, i.e., when $\missingBlocks$ is small, and committees changes are rare and (ii) that relies on minimal trust assumptions without (iii) incurring considerable computational overhead on full nodes?}

\subsection{Threat Model}\label{sec:model}

\subheading{Security Goal.}
We say that there is one correct global state $\BCstateOld$ of the underlying blockchain at each height $\heightOld$ and one correct validator set $\validators\up{i}$ in each epoch $i$. 
Then, a light client at blockchain height $\startHeight$ in epoch $\startEpoch$ stores in its local memory data $\sOld$ which, importantly, includes the current validators' ($\validators\up{\startEpoch}$) public keys (i.e., $\pk_v$ for $v \in \validators\up{\startEpoch}$) and a short digest of the blockchain's global state. In the $\Update$ protocol, the client interacts with a set of full nodes $\FNClient$, after which the client determines its new local state $\sNew$ corresponding to height $\currentHeight$ in epoch $\currentEpoch$.
{{In brief, our main security goal is to ensure that the light client can be sure with overwhelming probability that the resulting new local state $\sNew$ is indeed the correct global state.}}

Formally, let $\corresponds$ be the function that given a global blockchain state $\BCstateOld$ returns the (unique) corresponding correct light client state $\sOld$.
The following definition describes an update procedure as $(\sNew,\allowbreak{} \pi)\allowbreak{} \gets\allowbreak{} \Update_{\FNClient}(\sOld)$, meaning a query is sent to full nodes in $\FNClient$, each of sends (1) a response from which light client derives its new state $\sNew$ and (2) a proof $\pi$.
The definition guarantees that a client starting from the correct state at height $\startHeight$ either gets the light client state that corresponds to the global blockchain state of the new height $\currentHeight$ or the verification of the received state fails (except with negligible probability).

\begin{definition}\label{def:UpdSecurity}
We say that a light client scheme ($\Update,\allowbreak{} \VerifyState$) is secure if for all probabilistic polynomial-time adversaries $\advA$, all choices of global blockchain states $\BCstate\up{\startHeight}, \BCstate\up{\currentHeight}$, total blockchain lengths $\currentHeight$ and all heights $\startHeight \leq \currentHeight$, it holds that the success probability defined by 

\[\bigPr{
    \VerifyState(\sOld,\lcState, \\
    \pi) = 1 \wedge \lcState \neq \sNew
}{
    \sNew \gets \corresponds(\BCstate\up{\currentHeight})\\
    \sOld \gets \corresponds(\BCstate\up{\startHeight})\\
    (\lcState,\pi) \gets \Update_{\advA}(\sOld)
}\]
is negligible in the security parameter $\secparam$, where $\Update_{\advA}$ is the update protocol run with the adversary $\advA(\secparam, \startHeight,\currentHeight)$ choosing the behavior of all counterparties.
\end{definition}

\subheading{Attacks and Attackers.}\label{sec:attackers}
{{To motivate our attacker model, we show how a light client could be tricked into accepting a wrong state $sNew$. Recall that the intended information flow is that a full node $\FullNode$ collects information about the current state from the set of validators $\validators$ and uses this information to generate the response to the light client $\client$, i.e., we have
$\client \leftrightarrow \FullNode\leftarrow \validators.$ 
Here, we do not make any explicit assumptions with respect to guaranteed message delivery between any parties.
In particular, since an unreliable asynchronous (or adversarial) network environment might lead to the full nodes' messages to light clients getting dropped, unlike other works such as \cite{popos}, we will not assume that there always exists an honestly responding full node to every light client request.

On the contrary, we cannot exclude that $\FullNode$ and/or some of the validators in $\validators$ arbitrarily misbehave. In principle, if the client $\client$ gets wrong information, i.e., information that leads to a wrong state, this can only happen if one of the following situations do occur:
\begin{enumerate}
    \item All validators are honest, and the dishonest full node $\FullNode$ provides to $\client$ an outdated state, i.e., the state was correct at some previous point in time. \label{attack:OutdatedState}
    \item All validators are honest, and the dishonest full node $\FullNode$ provides to $\client$ an incorrect state, i.e., the state was never the global state (neither now or in the past) \label{attack:FakedState}
    \item Some validators are dishonest, and the dishonest full node $\FullNode$ provides to $\client$ an outdated state, i.e., the state was correct at some previous point in time. \label{attack:OutdatedStateCorruptedV}
    \item Some validators are dishonest, and the dishonest full node $\FullNode$ provides to $\client$ an incorrect state, i.e., the state was never the global state (neither now or in the past) \label{attack:FakedStateCorruptedV}
\end{enumerate}

With respect to Cases (\ref{attack:OutdatedState}) and (\ref{attack:OutdatedStateCorruptedV}), there can be no formal guarantee that any received state by the light client is the \enquote{most recent} one (new blocks might have come out before the full nodes' responses are received; full nodes themselves might not have received the most recent blocks yet).
In practice, a lightweight client can connect to multiple full nodes to ensure that at least one full node is honest and has access to the latest state. Moreover, the light client may rely on metadata to detect outdated data. For example, assuming that the light client $\client$ typically knows how long it has been offline, the epoch number of the (claimed) current state could indicate whether data is outdated. 
Case (\ref{attack:FakedState}) actually requires the full node $\FullNode$ to either drop or alter inputs coming from honest validators. Recall that the client $\client$ is assumed to know the public keys of the set of validators $\validators$. This prevents the full node from tampering with any messages signed by validators.

Consequently, the attacker model boils down to Case (\ref{attack:FakedStateCorruptedV}), where a bounded subset of the validators is malicious and cooperatively tries to forge a proof for a wrong state.
}}
Here, we assume a dynamic corruption of blockchain validators.
That is, for each epoch $i$, each of the $\committeeSize$ validators $v \in \validators\up{i}$ is either in the set $\validatorsAdv\up{i}$ of corrupted validators or in the set of honest validators.
The number of corrupted validators in any epoch $i$ is bounded by $\validatorsThreshold$ (e.g., $\validatorsThreshold \leq \frac{\NbrVal-1}{3}$ in traditional Byzantine fault tolerant protocols~\cite{pbft})\footnote{The consensus protocol determines, based on $\validatorsThreshold$, the minimum quorum size $\quorumSize$, that is, the minimum number of validators which must sign a block.}: \[\forall i: \; |\validatorsAdv\up{i} \cap \validators\up{i} | \leq \validatorsThreshold.   \]
The used consensus protocol then determines, based on $\validatorsThreshold$, the minimum quorum size $\quorumSize$, that is, the minimum number of validators which must sign a block.
Corrupted parties can deviate arbitrarily from the protocols, except that corrupt validators cannot sign messages regarding previous epochs $<i$ in which they were honest. This models, e.g., the use forward-secure signatures where private keys have multiple components, one associated with each epoch, and honest validators delete the epoch-specific components at the end of the epoch.

 \section{Light Client Protocol}\label{sec:lightclient}

In this section, we introduce our light client protocol, which exhibits communication complexity asymptotically linear in the number of periods where a quorum of validators remains the same.
In our design, validators sign epoch blocks using a signature scheme that is carefully tuned to later allow untrusted full nodes to \emph{efficiently prove that a fixed validator subset remained static} to a light client without additional trust assumptions.
Last but not least, we note that our design explicitly takes into account the observations of \cref{sec:observations} and mostly optimizes for realistic deployment scenarios where the validator set is relatively static.

Before introducing our protocol, we start by describing a strawman solution and use it as a basis to introduce our full protocol later on.

\subsection{Strawman Solution}

Consider $\missingBlocks$ completed consecutive epochs in which the client has been offline.
For simplicity, we use epochs $1, \dots, \missingBlocks$ in this overview (such that the current epoch is $\missingBlocks+1$), and we focus on the interaction with one full node.
The client's goal is to verify the latest signed digest of the blockchain's state---corresponding to a height that falls within the current epoch $\missingBlocks+1$---, given that the client knows the keys belonging to the validator set $\validators\up{1}$ of epoch 1.
Recall that in the standard design, at the end of every epoch $i$, validators publish signatures over the upcoming epoch's new validator set.
For any epoch $i$, let $\sig\down{v}\up{i}$ denote this signature created by a validator $v$ at the end of epoch $i$.
We formalize the exact message being signed as \enquote{$(i, \Nextkeys\up{i})$}.
In the strawman design, $\Nextkeys\up{i}$ is, naturally, exactly the list of all public keys of the next epoch's validators,
\[
\Nextkeys\up{i} = \{\pk_v \mid v \in \validators\up{i+1}\}.
\]

\subheading{Static \NextKeys/.}
Let us assume first that there was no change at all in the validator set across all epochs in $[\missingBlocks]$. The main idea is that validators should treat the case of no change from the previous epoch differently when they output the end-of-epoch signatures.
To serve a light client update, the full node will then aggregate all signatures $\sig\down{v}\up{i}$ into one signature $\Sig$ such that the client needs only to check $\Sig$ to verify that there was indeed no validator change in the whole period.
Now, what kind of special-case signature should be used in order to allow for efficient aggregation of this kind?

As a starting point, in order to ensure that $\Sig$ is really coupled with the correct epoch period, a natural approach is that $\sig\down{v}\up{i}$ is just a signature of \enquote{$i$}, that is $\sig\down{v}\up{i}=\Sign(\sk\down{v},i)$.
With standard aggregatable BLS signatures, where aggregation simply means multiplication (see \cref{sec:crypto}), the sketched protocol would result in
\begin{equation*}
    \Sig = \left(\groupHash(1)\cdot\ldots\cdot \groupHash(\missingBlocks) \right)^{\sumKey}
\end{equation*}
where $\groupHash$ is a hash function to an elliptic curve and $\sumKey$ is the sum of the secret keys of the validators whose signatures were aggregated.
While the client would now need to validate one signature only, the verification would still require computing a product of $\missingBlocks$ different values.
Thus, the light client's effort grows linearly with $\missingBlocks$.
This is already acceptable when $\missingBlocks$ is small---a frequent occurrence, as seen in \cref{obs:m}---, but we can still do better and \emph{circumvent} the \enquote{low $\missingBlocks$ or high $\missingBlocks$} tradeoff that existing designs make (\cref{obs:m}).

\subheading{Reducing Verifier Work.}
To overcome this effort, we draw on so-called transitive signatures and construct signatures according to the scheme GapTS-2 of \cite{bellareNevenTransitive} (\cref{sec:crypto}).
That is, signatures are quotients such that $\groupHash(i)$ values cancel out when signatures are multiplied together.
Only the start and end points remain.
Concretely, using
\begin{equation*}
    \sig\down{v}\up{i} = \left(\groupHash(i)/\groupHash(i-1)\right)^{\sk\down{v}}
\end{equation*}
creates the following upon aggregation:
\begin{align*}\label{eq:aggregate}
    \Sig &= \prod_{i \in [m]} \prod_{v} \sig\down{v}\up{i} \\
         &= \prod_{i \in [m]} \left(\groupHash(i)/\groupHash(i-1)\right)^{\sumKey} \\
         &= \left(\groupHash(\missingBlocks)/\groupHash(0)\right)^{\sumKey},
\end{align*}
for which verification is $O(1)$ as desired.
 
\subheading{Dynamic \NextKeys/, Break Points.}
Next, we consider the case that the validator set changed.
Formally, we have some index $\bp \in [m]$ such that $\Nextkeys\up{0} = \ldots = \Nextkeys\up{\bp-1}$ but $\Nextkeys\up{\bp-1}\allowbreak{} \neq \Nextkeys\up{\bp}$.
We call such an index a \emph{break point} (BP).
We extend our approach above as follows: for any epoch $i$, any validator $v$ produces the signature
\begin{equation}\label{eq:signing}
    \sig\down{v}\up{i} = \left\{\begin{array}{ll}
         \left(\groupHash(i, \Nextkeys\up{i})/\groupHash(i-1)\right)^{\sk_v} & \text{if } i \text{ is a BP} \\
         \left(\groupHash(i)/\groupHash(i-1)\right)^{\sk_v} & \text{else.} 
    \end{array}\right.
\end{equation}
The role of $\Nextkeys\up{i}$ here is twofold:
\begin{enumerate}
    \item indicating that a change in the set of validators takes place, as well as
    \item confirming the validity of the new validator set.
\end{enumerate}
We now define $\bp_1,\ldots,\bp_{\ell-1}$ as all break points in the range $1,\ldots,\missingBlocks-1$ and furthermore $\bp_0 = 0$ and $\bp_\ell = \missingBlocks$.
We divide the given period into subperiods $[\bp_0+1,\bp_1]$, \ldots, $[\bp_{\ell-1}+1,\bp_\ell]$.
For each subperiod $k\in [\ell]$, the full node generates the aggregated signature 
\begin{align*}
    \Sig_{k} &= \prod_{i\in [\bp_{k-1}+1, \bp_k]} \prod_{v} \sig\down{v}\up{i}.
\end{align*}
As the intermediate quotients cancel out as before, it holds that 
\begin{equation*}
    \Sig_{k} = \left(H(\bp_k, \Nextkeys\up{\bp_k})/\groupHash(\bp_{k-1}) \right)^{\sumKey_{k}}
\end{equation*}
where $\sumKey_{k}$ is the sum of secret keys of a subset of validators of this epoch range.
For the last subperiod, we have
\begin{equation*}
    \Sig_{\ell} = \left\{\begin{array}{ll}
        \left(\groupHash(\missingBlocks, \Nextkeys\up{\missingBlocks})/\groupHash(\bp_{\ell-1})\right)^{\sumKey_\ell} & \text{if } \missingBlocks \text{ is a BP} \\
        \left(\groupHash(\missingBlocks)/\groupHash(\bp_{\ell-1})\right)^{\sumKey_\ell} & \text{else.}
    \end{array}\right. 
\end{equation*}
Finally, the full node aggregates all signatures $\Sig_k$ and sends it to the client along with the break point indices and all the sets of public keys $\Nextkeys\up{\bp_1}, \dots, \Nextkeys\up{\bp_\ell}$.
Verification is clearly $O(\ell)$.

\subsection{Final Protocol}

\newcommand{\fullNodeSpace}{7.8cm}
\newcommand{\lightClientSpace}{7.8cm}
\newcommand{\arrowLength}{1cm}

\begin{figure*}[!t]
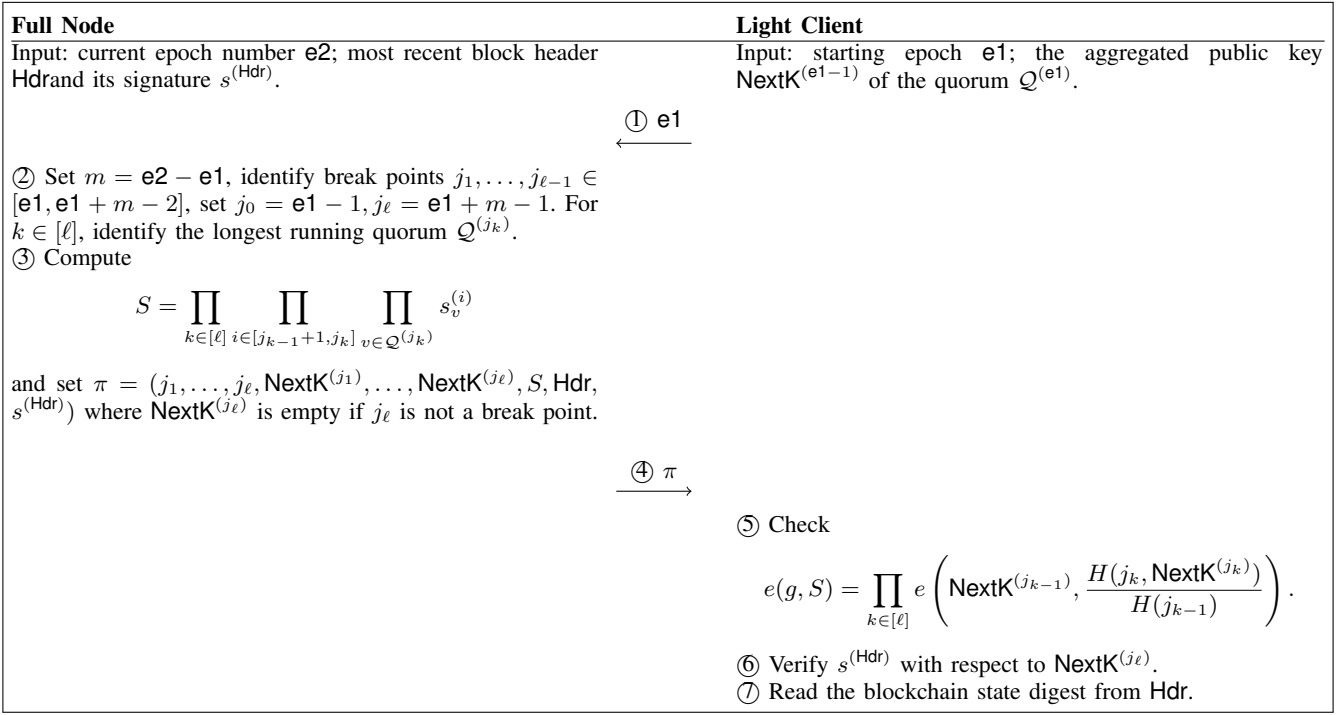

\begin{pcvstack}[boxed, center]\procedure[colspace=0cm]{}{\\[-2.5ex]
\textbf{Full Node} \<\< \textbf{Light Client} \\[][\hline]
    \begin{minipage}[t]{\fullNodeSpace}
        Input:
        current epoch number $\currentEpoch$;
most recent block header \newestHeader and
        its signature $\sig\up{\newestHeader}$.
    \end{minipage} \<\< \begin{minipage}[t]{\lightClientSpace}
        Input: starting epoch $\startEpoch$; the aggregated public key $\Nextkeys\up{\startEpoch-1}$ of the quorum $\quorum\up{\startEpoch}$.
    \end{minipage}\\
    \< \sendmessageleft*[\arrowLength]{\text{\step{1} } \startEpoch} \\
    \begin{minipage}{\fullNodeSpace}
        \step{2} Set $\missingBlocks = \currentEpoch - \startEpoch$, identify break points $\bp_1, \dots, \bp_{\ell-1} \in [\startEpoch, \startEpoch+\missingBlocks-2]$, set $\bp_0 = \startEpoch-1, \bp_\ell=\startEpoch+\missingBlocks-1$.
        For $k \in [\ell]$, identify the longest running quorum $\quorum\up{\bp_k}$. \\ \step{3} Compute
        \[
            \Sig = \prod_{k \in [\ell]} \prod_{i \in [\bp_{k-1}+1, \bp_k]} \prod_{v \in \quorum\up{\bp_k}} \sig\down{v}\up{i}
        \]
        and set $\pi = (\bp_1, \dots, \bp_{\ell}, \Nextkeys\up{\bp_1}, \dots, \Nextkeys\up{\bp_{\ell}},\allowbreak \Sig,\allowbreak \newestHeader,\allowbreak \sig\up{\newestHeader})$ where $\Nextkeys\up{\bp_{\ell}}$ is empty if $\bp_{\ell}$ is not a break point. \\
\end{minipage} \\
    \< \sendmessageright{length=\arrowLength, top={\step{4} $\pi$}} \\
    \<\< \begin{minipage}{\lightClientSpace}
\step{5} Check
    \[
    e(g, \Sig) = \prod_{k \in [\ell]} e\left(\Nextkeys\up{\bp_{k-1}}, \frac{\groupHash(\bp_k, \Nextkeys\up{\bp_k})}{\groupHash(\bp_{k-1})}\right).
    \]
    \step{6} Verify $\sig\up{\newestHeader}$ with respect to $\NextKeys/\up{\bp_{\ell}}$. \\
    \step{7} Read the blockchain state digest from \newestHeader.
    \end{minipage}
}\end{pcvstack}
\caption{
    Light client protocol for an update from epoch $\startEpoch$ to the current epoch $\currentEpoch$.
\label{fig:pseudocode}
    \vspace{-0.5 em}
}
\end{figure*} \subheading{Longest Running Quorum.}
The strawman design gives a protocol where the light client is $O(1)$ as long as the full validator set remains exactly the same.
We now move to the setting where only a static quorum is required.
This allows us to cater to minor changes in the validator set, like in Cosmos and Polkadot (see Figure~\ref{fig:committees}).
More specifically, we assume that validators keep track of the churn in the system (which is typical in most consensus deployments) and identify a subset of nodes that are most likely to remain static over time.
Suppose that each validator in epoch $i$ keeps a record of the number of epochs for which every (other) validator has already been in the set.
Each validator can then determine the unique subset $\quorum\up{i} \subset \validators\up{i}$ that has constituted a quorum for the most consecutive epochs so far (compared to all other subsets).
Ties can be broken using any deterministic rule to ensure that this \emph{longest running quorum} is unique\footnote{In full generality, the protocol works for any unambiguous definition of the longest running quorum.}.

\subheading{From $\validators$ to $\quorum$.}
With this, the only change the full protocol makes to the strawman is that we re-define $\Nextkeys\up{i}$ as a single value, the next longest running quorum's combined public key: \[
\Nextkeys\up{i} = \prod_{v \in \quorum\up{i+1}} \pk_v.
\]
Consequently, break points are also based on $\quorum\up{i}$ rather than $\validators\up{i}$.
This guarantees that valid signatures of the type $\groupHash(i)/\groupHash(i-1)$ issued by all validators in $\quorum\up{i}$ exist as long as the set $\quorum\up{i}$ remains a quorum.
Furthermore, such signatures \emph{only} exist as long as $\quorum\up{i}$ remains a quorum due to the honest majority assumption, which ensures security.
All of these signatures can be aggregated and verified in $O(1)$, making the protocol as efficient as desired.

With no further modifications, we end up with the simple 2-move protocol in \cref{fig:pseudocode}, which we briefly walk through.
First, the light client sends the starting epoch $\startEpoch$ (\step{1}).
Next, the full node must find the break points $\bp_1, \dots, \bp_{\ell-1}$ between $\startEpoch$ and the current epoch $\currentEpoch$, as well as the subsets $\quorum\up{\bp_1+1}, \dots, \quorum\up{\bp_{\ell-1}+1}$ and aggregate the signatures of all validators in these sets (\step{2}, \step{3}).
The full node then sends over (\step{4}) the break points, the combined public keys of all longest-running quora and the most recent block header.
To conclude, the light client verifies that each longest-running quorum signed the next one's keys (\step{5}) and uses the final quorum's keys to verify the most recent header, from which the newest state is obtained (\step{6}, \step{7}).

\subheading{Efficiency.}
Clearly, the communication complexity and verifier work in our protocol are linear in the number $\ell$ of static-quorum periods and independent of $\committeeSize$ due to restricting the light client's checks to the aggregated public key of a valid quorum.
The full node only performs $\missingBlocks |\quorum|$ elliptic curve multiplications to aggregate signatures.

As mentioned in~\cref{obs:trust}, we note that existing light client protocols either put a high computational load on full nodes or require a relaxed trust model to be secure.
Ours, on the other hand, is secure in a strong trust model (analyzed next) and simultaneously obviates the need for full nodes to compute expensive full-fledged zero-knowledge proofs in favor of mere multiplications.

\subsection{Security Analysis}

{{As described in \cref{sec:attackers}, we focus on the case that a subset of validators is malicious and aims to provide the light client with wrong information. Our scheme provides security in that case by efficiently ensuring that the response of the full node is \emph{signed} by the \emph{majority} of all validators. }}
To analyze our construct's security, we first show that a certain building block is implicit in our protocol description, which we refer to as a transitive multi-signature scheme, and for which we can independently prove a meaningful security notion.
Then, the overall security will follow from the transitive multi-signature's security as well as the system's inherent trust assumptions (that is, without introducing new ones).

While transitive signatures are known---recall from \cref{sec:crypto} that they allow for $O(1)$ verification of sequences of messages that are consecutive in a well-defined sense, only based on the start and end points, we are not aware of transitive signatures with multiple signers' public keys having been considered before.
We formally define this generalized functionality in \cref{sec:transitiveMulti} as a transitive multi-signature.

\begin{proposition}\label{prop:multi}
The signature scheme implicitly defined by \cref{eq:signing}, Step~\step{3} and Step~\step{5} is a secure transitive multi-signature scheme.
\end{proposition}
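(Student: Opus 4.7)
The plan is to reduce the transitive multi-signature forgery game to the standard unforgeability of the underlying single-signer transitive signature scheme GapTS-2 of Bellare and Neven (recalled in \cref{sec:crypto}). The central structural observation is that the verification check in Step~\step{5} is exactly a chain of GapTS-2 verifications along the path $\bp_0 \to \bp_1 \to \cdots \to \bp_\ell$, where each segment's verifying public key $\Nextkeys\up{\bp_{k-1}}$ equals $\prod_{v \in \quorum\up{\bp_{k-1}+1}} \pk_v$. By the homomorphism of BLS-style exponents, the aggregated $\Sig$ computed in Step~\step{3} behaves exactly as a single GapTS-2 signature issued by a virtual signer whose secret key is $\sum_{v} \sk_v$.

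First, I would pin down the precise transitive multi-signature security notion from \cref{sec:transitiveMulti} and characterize a valid forgery as an aggregate $\Sig$ that verifies under a quorum public key containing at least one honest validator, for an edge $(\bp_{k-1}, \bp_k, \Nextkeys\up{\bp_k})$ that no honest quorum ever signed under the rule of \cref{eq:signing}. Second, I would build the reduction $\advB$: given a GapTS-2 challenge public key $\pk^\star$, the reduction plants $\pk^\star$ into one honest validator's slot, samples all other honest secret keys itself, and services $\advA$'s sign queries by combining GapTS-2 oracle outputs for $\pk^\star$ with locally computed factors of the form $(\groupHash(i,\Nextkeys\up{i})/\groupHash(i-1))^{\sk_v}$ for the remaining honest keys. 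When $\advA$ returns a forged aggregate, $\advB$ divides out the contributions of every key whose secret it knows, isolating a GapTS-2 forgery under $\pk^\star$ for an edge that was never queried.

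The main obstacle will be ruling out rogue-key attacks, the well-known pitfall of plain BLS aggregation: a corrupted validator can choose $\pk_v$ adaptively after observing the honest keys and in principle cancel $\pk^\star$ out of the aggregate, so that the forger produces a signature under a maliciously crafted quorum key rather than one derived from $\pk^\star$. I would adopt the standard countermeasure of requiring a proof of possession, i.e., a zero-knowledge proof of knowledge of the secret key, at committee-joining time; this allows $\advB$ to extract corrupted validators' secret keys during setup, simulate perfectly, and cleanly subtract their contribution from $\advA$'s forgery. A minor technicality is that $\advB$ must guess which honest quorum along the forged path contains the embedded key $\pk^\star$, which a standard guessing argument handles with only a polynomial loss factor. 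Combining these ingredients yields the desired reduction: any PPT forger for our transitive multi-signature implies a PPT GapTS-2 forger, establishing \cref{prop:multi}.
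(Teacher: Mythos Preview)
Your proposal is correct and follows essentially the same reduction as the paper: reduce to the tuf-cma security of GapTS-2, handle rogue keys via a knowledge-of-secret-key (proof-of-possession) assumption, and divide out all known-key contributions from the adversary's aggregate to isolate a single-signer GapTS-2 forgery. The paper simply presents this more modularly: \cref{prop:multi} itself is just the observation that \cref{eq:signing}, Step~\step{3}, and Step~\step{5} instantiate the abstract scheme of \cref{fig:transitive} (with $x \mapsto i-1$ and $y \mapsto i$ or $(i,\Nextkeys\up{i})$), and the actual reduction you sketch is deferred to \cref{prop:transitiveMulti}. One small point: your guessing step for which honest quorum contains the embedded $\pk^\star$ is unnecessary in the paper's formulation, since the secure-aggregation definition in \cref{sec:transitiveMulti} fixes a single challenge key $\pk$ and requires the adversary's output to satisfy $\pk \in \PK_j$ for some $j$, so the reduction knows $j$ directly and incurs no polynomial loss.
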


\begin{proof}\emph{(Sketch.)}
{{The claim follows from two facts. (1) The signing, aggregation, and verification functions used in the protocol are the same as those of the scheme in \cref{fig:transitive} (simply substituting the variable $x$ with $i-1$, $y$ with $i$ or $(i, \Nextkeys\up{i})$).
(2) The scheme in \cref{fig:transitive} is a secure transitive multi-signature scheme (which is shown in \cref{prop:transitiveMulti} in~\cref{sec:transitiveMulti}).}}
\end{proof}

By leveraging~\cref{prop:multi} and the per-epoch majority honesty of validators---an assumption already present in most blockchain systems---we can now prove the security of our light client.

\newcommand{\propLightclient}{
The protocol of \cref{fig:pseudocode} is a secure light client update protocol as per \cref{def:UpdSecurity}.
}
\begin{proposition}\label{prop:lcSecurity}
\propLightclient{}
\end{proposition}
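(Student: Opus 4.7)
The plan is to reduce the security of the light client protocol to the transitive multi-signature security established in \cref{prop:multi}, combined with the per-epoch majority-honesty assumption introduced in \cref{sec:attackers}. Assume towards contradiction that there is a PPT adversary $\advA$ that, with non-negligible probability, produces a proof $\pi = (\bp_1, \dots, \bp_\ell, \Nextkeys\up{\bp_1}, \dots, \Nextkeys\up{\bp_\ell}, \Sig, \newestHeader, \sig\up{\newestHeader})$ and a state $\lcState \neq \sNew$ that pass the checks in Steps \step{5}--\step{7}. I will construct a reduction $\advB$ against the transitive multi-signature scheme that embeds its challenge public keys into the honest validators of $\quorum\up{\startEpoch}, \ldots, \quorum\up{\currentEpoch}$ (the true longest-running quora), answers $\advA$'s signing queries using the multi-signature signing oracle on the genuine sequence of messages, and extracts a forgery from $\advA$'s output.

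First I would identify, for the true blockchain evolution, the correct sequence of break points $\bp_1^\star, \ldots, \bp_{\ell^\star}^\star$ and the corresponding aggregated quorum keys $\Nextkeys\up{\bp_k^\star}$. Let $k^\star$ be the smallest index where $\advA$'s proof disagrees with the true sequence---either in a break-point location, in an aggregated quorum key $\Nextkeys\up{\bp_{k^\star}}$, or (if there is no disagreement in the intermediate chain) in the final header $\newestHeader$ or its signature $\sig\up{\newestHeader}$. In the intermediate-chain case, the Step \step{5} verification equation implies that $\advA$ produced a signature that passes transitive-multi-signature verification for a subperiod ending at $(\bp_{k^\star}, \Nextkeys\up{\bp_{k^\star}})$ with respect to the honest aggregated public key $\Nextkeys\up{\bp_{k^\star-1}^\star}$. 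Since the actual longest-running quorum at that position never signed the incorrect next-quorum claim (by the honest-majority assumption, at least one honest validator in that quorum would have refused), and since forward-secure keys prevent corrupted validators from retroactively signing for earlier epochs when they were honest, this signature constitutes a fresh forgery that $\advB$ can output. In the final-header case, all intermediate transitions match the truth, so $\Nextkeys\up{\bp_\ell}$ equals the true final quorum's aggregated key; if $\newestHeader \neq $ the true header, then $\sig\up{\newestHeader}$ verifying against $\Nextkeys\up{\bp_\ell}$ is again a forgery (a standard BLS multi-signature forgery in this case, which is a special case of the transitive scheme with empty \enquote{previous} label).

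The hard part will be ensuring that the reduction remains tight across the adversarial choice of break points. The adversary is free to merge several true static-quorum periods into one claimed period, or split a true period, as long as the aggregated equation balances; I need to argue that whenever the adversary's choice differs from the honest one, there must exist an epoch inside the affected subperiod in which the claimed signing quorum was \emph{not} in fact the longest-running quorum, so that a quorum of honest validators did not contribute the required $(\groupHash(i)/\groupHash(i-1))^{\sk_v}$ factor for the claimed quorum. Carefully formalizing this, together with the deterministic tie-breaking rule used to uniquely define $\quorum\up{i}$, is where the bulk of the bookkeeping lies. Once that step is in place, the probability of the reduction's success equals $\advA$'s success probability up to a negligible loss from the signing-oracle simulation, and \cref{prop:multi} yields the desired contradiction, completing the proof.
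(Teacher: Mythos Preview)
Your overall strategy---reduce to the transitive multi-signature unforgeability of \cref{prop:multi} and argue inductively along break points that each $\Nextkeys\up{\bp_k}$ the adversary supplies must equal the true one---coincides with the paper's proof, which follows the same chain-of-trust line but stays at a higher level of informality. Where you go further is in trying to pin down what happens when the adversary's claimed break points differ from the honest ones, and here your diagnosis of the ``hard part'' is off.

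You argue that an adversarial merge or split must create an epoch in which ``the claimed signing quorum was not in fact the longest-running quorum, so that a quorum of honest validators did not contribute the required $(\groupHash(i)/\groupHash(i-1))^{\sk_v}$ factor.'' But in the protocol (\cref{eq:signing}) \emph{every} current validator produces its epoch-$i$ signature regardless of whether it belongs to $\quorum\up{i}$; quorum membership only governs which signatures the honest full node chooses to aggregate in Step~\step{3}, not what validators sign. So in your merge scenario across a true break point $\bp^\star$, a member of $\quorum\up{\startEpoch}$ that remains a validator after $\bp^\star$ \emph{does} output $(\groupHash(\bp^\star+1)/\groupHash(\bp^\star))^{\sk_v}$, and your criterion fails to fire. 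Conversely, in the split scenario (adversary claims $\bp_1$ strictly before the first true break point) the claimed quorum \emph{is} the true longest-running quorum at every epoch in the claimed subperiod, so your criterion again fails, yet this is still a forgery. The actual obstruction is purely graph-theoretic and simpler than the bookkeeping you anticipate: at a true break point $\bp^\star$ the honest edge is $\bigl(\bp^\star-1,\,(\bp^\star,\Nextkeys\up{\bp^\star})\bigr)$, not $(\bp^\star-1,\,\bp^\star)$, so in every honest validator's signing graph the node ``$(\bp^\star,\Nextkeys\up{\bp^\star})$'' has degree one and the plain node ``$\bp^\star$'' is unreachable from $\startEpoch-1$. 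Any adversarial claim whose first subperiod does not terminate exactly at $(\bp_1^\star,\Nextkeys\up{\bp_1^\star})$ therefore corresponds to a nonexistent path, and tuf-cma (\cref{prop:transitiveMulti}) yields the forgery directly. This handles merging, splitting, and substituting a wrong $\Nextkeys$ uniformly, with no per-epoch quorum accounting or tie-breaking analysis needed.
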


\begin{proof}\emph{(Sketch.)}
To satisfy \cref{def:UpdSecurity}, we argue that if the update protocol succeeds, then the light client is at the correct blockchain state in the epoch specified in block header \newestHeader{} (in epoch \currentEpoch{}).
Specifically, if the check in step \step{5} succeeds, then the light client can be sure that the validators in $\quorum\up{\startEpoch-1}$ have indeed signed $\Nextkeys\up{\bp_1}$ etc.
This is due to this check exactly implementing the transitive signature verification, which fulfills exactly this security notion (\cref{prop:multi,prop:transitiveMulti}).

{{For simplicity, let us first consider the case that only one epoch needs to be covered, i.e., $\missingBlocks=1$. The multi-signature property allows to combine signatures from $\quorum\up{\startEpoch-1}$ into one signature. Here, the light client ensures that the validators represented in $\quorum\up{\startEpoch-1}$ correspond to the majority of currently active validators. 
Obviously, an attack can only succeed if an attacker attempts to ''sneak'' wrong values into this process. For example, a malicious full node could send a malformed blockchain header $\Hdr$ ostensibly from epoch $\currentEpoch$.
As honest $\currentEpoch$ validators would not sign a malformed header, and the light client uses the honest validators' public keys to verify the signature over $\Hdr$, our protocol design inherently prevents such attacks.

  Instead of forging signatures, the malicious validators in $\quorum\up{\startEpoch-1}$  could alternatively decide to jointly sign the same wrong input. That is, we would have wrong values that are authenticated by correct signatures. As the remaining validators in $\quorum\up{\startEpoch-1}$ are honest, this would result in a situation where an incorrect message is signed by the malicious validators and a correct (potentially conflicting) message is signed by the honest validators. These (incorrect and correct) messages cannot be aggregated into a single correct signature without causing a mismatch. This mismatch would be easily detected by the light client, which would reject the received state information.
As long as the quorum $\quorum$ of validators accepted by the light client contains (sufficiently many) honest validators, any such attack would fail. 
  
  This leaves attackers only one option: to trick the light client into accepting a wrong quora of validators. That is, the full node and malicious validators could try to include an incorrect $\Nextkeys$ value in $\pi$.
However, because the light client checks that each $\Nextkeys$ is signed by the previous $\Nextkeys$ and the first used $\Nextkeys$ is trusted, there is no opportunity to include a bad $\Nextkeys$ for a malicious full node.}}
We include the full proof in~\cref{sec:proof}.
\end{proof} \section{Implementation \& Evaluation}

In this section, we detail the prototype implementation of our solution, and we evaluate its performance compared to PoPoS~\cite{popos} and CSSV~\cite{cssv}.

\begin{figure*}[tbp]
    \centering
    \begin{subfigure}{0.32\textwidth}
        \includegraphics[width=\linewidth]{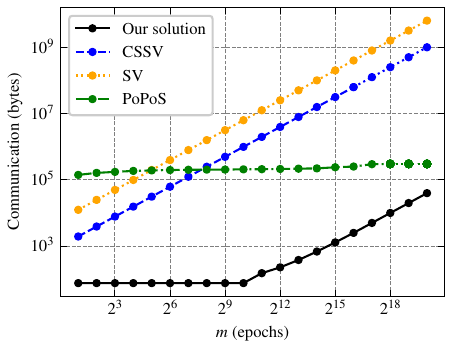}
         \caption{Proof size. For PoPoS: size of all full nodes' messages to the client.\label{fig:evaluationSize}}
    \end{subfigure}
\begin{subfigure}{0.32\textwidth}
        \includegraphics[width=\linewidth]{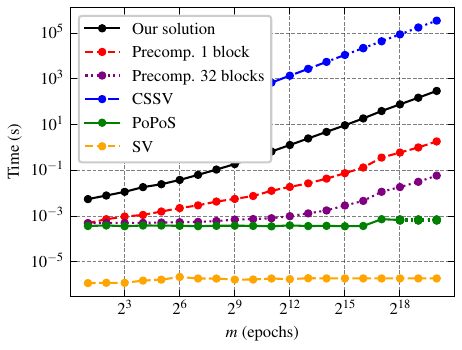}
         \caption{Proof creation time (measured at the full node).\label{fig:evaluationProver}}
    \end{subfigure}
\begin{subfigure}{0.32\textwidth}
        \includegraphics[width=\linewidth]{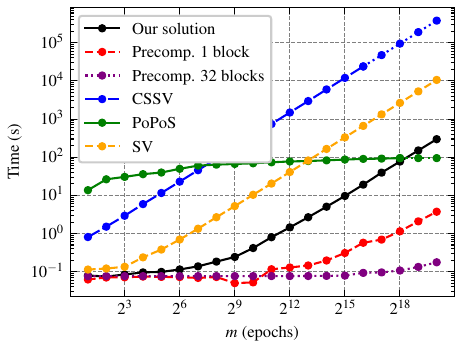}
         \caption{End-to-end update latency (measured client-side).\label{fig:evaluationLatency}}
    \end{subfigure}
    \caption{{Evaluation results of our solution when compared to SV, PoPoS~\cite{popos} and CSSV~\cite{cssv} in our implementation setup.
Also note that the PoPoS code crashed for values of $m>2^{18}$.
    We, therefore, had to extrapolate the PoPoS latencies corresponding to $m=2^{19}$ and $m=2^{20}$.}
\label{fig:evaluation}}
        \vspace{-1 em}
\end{figure*} 
\subsection{Implementation Setup}

We implemented a prototype light client system in Go.
In our implementation, we relied on gnark-crypto's \cite{gnark-crypto} implementation of the elliptic curve BLS12-377\footnote{The curve BLS12-377 is chosen to enable a fair comparison for comparison with CSSV, whose prototype implementation relies on this curve.
Our scheme and PoPoS are agnostic to the curve choice.}.
On this curve, a single signature takes up 96 bytes, and a public key takes up 48 bytes.
Since our main goal was to evaluate the performance of our approach compared to competing approaches, in our prototype implementation, the full node has access to a generated chain of validator public keys and signatures that it feeds to the light client.
Specifically, we generated $2^{20}$ sets of $\committeeSize = 128$ public keys, with each set standing for one epoch.
To model the setting that validator changes are always possible but often have a static quorum, the epochs are divided into subperiods of a fixed number of epochs where such a static quorum is present for the whole subperiod.
The full node also had signatures, valid under these $\quorumSize$ static public keys, over each set of the next $\committeeSize$ keys.
Additionally, we performed measurements where the full node's work is helped by realistic pre-computations: namely, { (1) \enquote{\textbf{precomp., 1 block}}, where the signatures from each epoch have already been pre-aggregated (i.e., for each epoch $i$, $\prod_{v} \sig\down{v}\up{i}$ was already available in step \step{3} of \cref{fig:pseudocode}) and (2) \enquote{\textbf{precomp., 32 blocks}}, where all signatures for 32 consecutive epochs have already been pre-aggregated.
The latter strategy is geared towards settings with short epoch times like in Cosmos (6 seconds), where light client updates over larger $\missingBlocks$ (like $2^{17}$) are expected.}
We deployed our full node implementation on a dual 64-core AMD EPYC 7742 machine, clocked at 2.25 GHz, with 1024 GB of RAM.
The light client ran on an 8-core Apple M2 laptop. 
We conducted our experiments within a wide area network (WAN) with a mean delay of 30 ms and a mean bandwidth of 16 megabits per second. We argue that such a WAN setting faithfully emulates the deployment setting of many light client users around the globe. In our setup, each client invokes an operation in a closed loop; a client may have at most one pending operation.
For different values of $\missingBlocks \leq 2^{16}$, we measure the end-to-end latency observed by the client for a successful update from height 1 to height $\missingBlocks$, as well as the proof generation time measured at the full node.
Each data point in our plots is averaged over 50 independent trials.

\subheading{Baselines.} 
For a baseline comparison, we considered the light client protocols of~\cite{popos} (\enquote{\textbf{PoPoS}}) and of~\cite{cssv} (\enquote{\textbf{CSSV}}), {as well as sequential verification (\enquote{\textbf{SV}}) which sends all public keys of all intermediate validators in full---similarly to Bitcoin's SPV and as used in Polkadot and by Ethereum clients like Helios (Section~\ref{sec:related})}.
We chose PoPoS due to it being the only asymptotically sublinear protocol and CSSV because it seems to be the most efficient one in the same trust model as ours.
For PoPoS, we used the benchmarking code provided alongside the paper \cite{popos} with the following modifications:
\begin{enumerate}
\item While PoPoS is a multi-prover protocol, we run 8 different full nodes on independent CPU cores on the \emph{same} server machine so that the light client has a uniform network connection to all full nodes;
\item We set the committee size parameter $\committeeSize = 128$ as above and set the internal \enquote{tree degree} parameter to 100 which is recommended in \cite{popos};
\item We add code to measure the total time spent by the full node on computing its responses; below, we report the average of the 8 full nodes.
\end{enumerate}

Regarding CSSV's (single-server, like ours) protocol, in lieu of an implemented networked light client benchmark, we used the local evaluation tool part of the open-source implementation of~\cite{cssv} (unmodified).
The tool runs on a single machine to create a chain of validator sets and corresponding signatures and proofs.
The time to create and verify light client proofs is output for each individual validator set.
Below, as \enquote{end-to-end latency} for an update from epoch 1 until epoch $i$, we use the sum of all the proving times plus the verification times of epochs from 1 to $i$.

{{\subheading{Parameter Choice.}}}
Note that, in our experiments, the meaning of absolute $\missingBlocks$ values depends on the actual instantiation: in terms of wall-clock time, its meaning depends on the definition of an epoch in a given system (measured in a number of blocks) as well as the time for blocks to be created.
For example, $2^{16}$ Cosmos epochs mean, with around 6 seconds per block and 1 block per epoch, around 4 days and 12 hours.
On the other hand, in Polkadot epochs, $2^{16}$ would instead amount to around 179 years because an epoch spans one day.

{Notice that the performance of our protocol largely depends on the length of the period during which a static quorum is present. Since our scheme is designed to optimize for rare committee changes, the larger this period, the more efficient our scheme is. In our experiments below, we set this length to 2000\footnote{As shown in \cref{fig:committees}, Cosmos often has longer periods with a static quorum.}.}

\subsection{Performance Evaluation}\label{sec:evaluation}

\noindent We now evaluate the proof size, the latency, and the full node costs incurred by our solution when compared to PoPoS~\cite{popos} and CSSV~\cite{cssv}. 
The full results are depicted in \cref{fig:evaluation}.

{{\subheading{Security.}
First, let us recall that SV, our scheme, and CSSV operate in the standard security model where full nodes are untrusted and only \emph{validators} are the trust anchors.
PoPoS, on the other hand, provides weaker security, where the light client must always be connected to an honest full node. In other words, corrupting a single full node in PoPoS can break the security of the protocol, while the other schemes provide cryptographic security and still can resist a compromise of an arbitrary number of full nodes.}}

\subheading{Proof Size.}
The comparison of proof sizes is shown in \cref{fig:evaluationSize}.
In the parameter regime of this evaluation, our scheme sends a proof with a size of 152 bytes per static-quorum period.
This proof is composed of (1) the new epoch number (8 bytes), (2) an aggregated public key (48 bytes), and (3) the aggregated signature (96 bytes). Since short updates ($\missingBlocks < 2000$) in our implementation setup fall into such periods, this is exactly the number of communicated bytes in these cases.
With larger $\missingBlocks$, proof size increases linearly up to around 80 kilobytes for $2^{20}$ epochs.
On the other hand, PoPoS' interactive multi-round multi-server protocol required more data to be transmitted overall: almost uniformly the same amount of data for all $\missingBlocks$, namely around 200 kilobytes.
Thus, our solution has a smaller proof size by more than 3 orders of magnitude when considering a single static-validator period---a gap which narrows as $\missingBlocks$ increases.
{For SV and CSSV, since static-committee periods are not optimized, every epoch has a (redundant) proof of size 6240 and around 976 bytes, respectively.
In direct comparison, our solution has a smaller size, compared to CSSV, by at least $5.8\times$ and in the experiments with $\missingBlocks=2^{20}$ by 4 orders of magnitude. When compared to the basic SV baseline, our solution exhibits smaller proof sizes by 5 orders of magnitude.}

\subheading{Full Node Computations.}
The comparison of proof creation costs is shown in \cref{fig:evaluationProver}.
With our solution, the full node's computations consist of $\missingBlocks \cdot \committeeSize = \missingBlocks \cdot 128$ elliptic curve multiplications. In our setup, this took under 100 milliseconds for $\missingBlocks<2^8$ and around 700 seconds for $\missingBlocks=2^{20}$.
The 1-block pre-computation strategy needs only $\missingBlocks$ multiplications since the first aggregation step over $\committeeSize$ has been done previously.
\cref{fig:evaluationProver} confirms that the 1-block pre-computations reduce the full node's computation time to under 100 milliseconds when $\missingBlocks \leq 2^{15}$. 
{The 32-block pre-computation strategy further reduces this overhead by a factor of 32.}
{In the case of SV, the overhead of proof generation is negligible since the role of the full node simply consists of relaying the signatures gathered from the validators.
Moreover, PoPoS full nodes have lower requirements since the protocol's core operations are logarithmic in $\missingBlocks$ (albeit at the cost of additional trust assumptions; see Section~\ref{sec:related}).}
We contrast this with CSSV, where computing a proof took around 200 milliseconds \emph{for one epoch}.
For instance, we note that, in the time taken to compute a CSSV proof for one epoch, our solution with pre-computations allows creating a proof that covers over $2^{16}$ epochs, a $65536\times$ improvement or 4 orders of magnitude.
When aggregating from scratch, the improvement is still $2048\times$.

\subheading{Latency.}
The comparison of end-to-end client latencies is shown in \cref{fig:evaluationLatency}.
Note that these values are lower bounded by prover computation time plus network latency (except for CSSV, where the numbers have no networking component).
We can thus conclude from the prover time shown in \cref{fig:evaluationProver} alone that our solution's latency is likely dominated by the implementation setup's round trip time of 30 milliseconds for $\missingBlocks$ as large as 32 epochs or, with pre-computations, for $\missingBlocks$ as large as $2^{13}$.
Indeed, as seen in \cref{fig:evaluationLatency}, for smaller $\missingBlocks$, our solution's end-to-end latency is close to constant.
After that, we find linear increases in accordance with the increasing prover time.
Overall, our scheme's latency is below one second for $\missingBlocks$ up to $2^{11}$ or, with pre-computations, up to $2^{20}$.
We contrast this with PoPoS where, likely due to the multiple rounds of network communications, even an update for $\missingBlocks=2$ epochs takes over 20 seconds.

{
Asymptotically, PoPoS' latency scales attractively as the case $\missingBlocks=2^{20}$ still takes only around 100 seconds.
When compared to the non-pre-computation variant of our solution (300 seconds), this is favorable.
However, two caveats apply.
First, as we observed, large $m$ updates only make up a relatively small fraction of real-world requests (see \cref{obs:m}).
To put the absolute values into perspective, in a Polkadot-type system with 1 epoch per day, $m \geq 2^{18}$---where PoPoS outperforms our scheme's non-pre-computation variant---is equivalent to over 700 years, which is unrealistic.
Such large update distances can only occur in systems with short epochs akin to Cosmos.
But here, as the second caveat, the picture changes when pre-computations are considered.
Indeed, the 1-block pre-computation strategy already achieves a performance improvement of more than an order of magnitude compared to PoPoS at $m=2^{20}$.
The 32-block pre-computation strategy further reduces the overall latency by approximately 3 orders of magnitude compared to PoPoS at $m=2^{20}$.

Finally, we observe that our solution's end-to-end latency, both with and without pre-computations, already at $\missingBlocks=2$, is around $10\times$ lower than CSSV's and half of SV's.
These gaps increase, leading to improvements of, respectively, 3 orders of magnitude and $30\times$ without any pre-computations.
With pre-computations, the difference to CSSV and to SV is around 6 and, respectively, 4 orders of magnitude.}

 \section{Conclusion}

In this paper, we presented a novel light client protocol for committee-based blockchains that exhibits low communication and computation costs on both the lightweight client and full nodes. Our protocol is mostly geared to practical deployment settings where (1) lightweight clients are not offline for long periods of time or (2) the signers of block headers have a long-term static core quorum.
Our protocol leverages transitive signatures \cite{transitiveSignatures} to verify long sequences of consecutive block headers---independently of the number of blockchain epochs.
To improve efficiency further, we argue that the compatibility of transitive signatures with existing multi-signature aggregation techniques allows for better compression of signatures issued by the blockchain validators. 
We validated the performance of our protocol by means of prototype implementation in a realistic WAN setting. Our evaluation results show that our protocol achieves a reduction by up to $41726\times$ in end-to-end update latency and around $1000\times$ smaller proof size when compared to the PoPoS~\cite{popos} and CSSV~\cite{cssv} designs.
 
\section*{Acknowledgements}
\noindent
The authors would like to thank the anonymous shepherd and reviewers for their constructive feedback.
We also thank Anna Piscitelli and Annika Wilde for testing and discussing our evaluation code and its documentation.
We thank Kyle Rudnick and Julian Willingmann for assistance in gathering data for \cref{fig:m,fig:committees}.

This work has been partially funded by the Deutsche Forschungsgemeinschaft (DFG, German Research Foundation) under Germany's Excellence Strategy - EXC 2092 CASA - 390781972 and by the European Union through the HORIZON-JU-SNS-2022 NANCY project with Grant Agreement number 101096456. Views and opinions expressed are, however, those of the author(s) only and do not necessarily reflect those of the European Union or the SNS JU. Neither the European Union nor the granting authority can be held responsible for them.

\printbibliography[heading=bibintoc]
\normalsize
\appendices
\section{Transitive multi-signatures}\label{sec:transitiveMulti}

In this section, we put forth a notion of transitive multi-signatures.
We show that constructing a transitive multisiganture from Bellare and Neven's GapTS-2 \cite[Section 5.C]{bellareNevenTransitive} scheme works analogously to the construction of multi-signatures from BLS signatures.

\subsection{Definitions}\label{sec:proofDefs}

A standard transitive signature consists of three algorithms \KeyGen{}, \Sign{} and \Verify{} and makes it possible to combine the same signer's signatures over messages of the form $(x,\allowbreak{} y)$ and $(y,\allowbreak{} z)$ into a valid signature over $(x,\allowbreak{} z)$.
Such messages are best interpreted as edges of an undirected graph $G$, meaning that aggregation is possible along \emph{paths}.
The formal security notion for transitive signatures is dubbed transitive unforgeability under adaptive chosen message attack (tuf-cma) and demands that no probabilistic polynomial-time adversary with access to a signing oracle for the public key $\pk$ can, with non-negligible probability, output $x,\allowbreak{} y$ and a signature $\sig$ such that (1) $\Verify(\pk,\allowbreak{} x,\allowbreak{} y,\allowbreak{} \sig)$ accepts even though (2) there is no path from $x$ to $y$ in $G$.
Here, queries to the signing oracle add nodes and edges to $G$.

\subheading{Transitive Multi-Signatures.}
The following definition is inspired by existing notions of signature aggregation \cite[cf.][]{thresholdBLS}, which allows signatures from different sets of signers on different messages to be combined into a single one. 
In practical systems, it is typical to demand a key registration phase in which so-called proofs of possession are checked in order to prevent certain attacks \cite[cf.][]{ry07}.
For simplicity, our analysis will model this usage using a so-called knowledge of secret key assumption (see below), and we hence omit explicitly specifying proof of possession algorithms as part of the signature scheme definition\footnote{For the scheme presented below, the same techniques as for BLS signatures apply.}.

\begin{definition}[Transitive Multi-Signature]
A transitive multi-signature consists of the following algorithms.
\begin{itemize}
    \item $\KeyGen(\secparam) \rightarrow (\sk, \pk)$.
        For a security level $\secpar$, outputs a secret key and a public key.
    \item $\Sign(\sk, x, y) \rightarrow \sig$.
        For $x, y \in \{0,1\}^*$, outputs a signature $\sig$.
    \item $\Aggregate(\sig_1, \dots, \sig_n) \rightarrow \Sig$.
        For signatures $\sig_1,\allowbreak{} \dots,\allowbreak{} \sig_n$, outputs an aggregated signature $\Sig$.
    \item $\Verify((\PK_1, \dots, \PK_n),\allowbreak{} ((x_1,\allowbreak{} z_1),\allowbreak{} \dots,\allowbreak{} (x_n,\allowbreak{} z_n)),\allowbreak{} \Sig) \rightarrow 0/1$.
        For lists of public keys $\PK_1,\allowbreak{} \dots,\allowbreak{} \PK_n$, for $x_1,\allowbreak{} z_1,\allowbreak{} \dots,\allowbreak{} x_n,\allowbreak{} z_n \in \{0,1\}^*$ and for a signature $\Sig$, accepts or rejects.
\end{itemize}

A transitive multi-signature supports \emph{secure aggregation} if no probabilistic polynomial-time adversary with access to a signing oracle for the public key $\pk$ can, with non-negligible probability, output sets of public keys $\PK_1,\allowbreak{} \dots,\allowbreak{} PK_n$, a signature $\sig$ and $(x_1,\allowbreak{} z_1),\allowbreak{} \dots,\allowbreak{} (x_n,\allowbreak{} z_n)$ such that
\begin{enumerate}
\item $\Verify((\PK_1,\allowbreak{} \dots,\allowbreak{} \PK_n),\allowbreak{} ((x_1,\allowbreak{} z_1),\allowbreak{} \dots,\allowbreak{} (x_n,\allowbreak{} z_n)),\allowbreak{} \sig)$ accepts,
\item $\exists j \in [n]$ such that $\pk \in PK_j$ and
\item there is no path between $x_j$ and $z_j$ in the graph created from signing queries to the $\pk$ oracle.
\end{enumerate}
\end{definition}

\subsection{Construction}

\newcommand{\belowLine}{4pt}
\newcommand{\nextKeysHeight}{\vphantom{$\Nextkeys\up{i-1}$}}
\begin{figure}[!t]
\centering
\begin{pcvstack}[boxed, space=\baselineskip]
    \begin{pchstack}[space=0.3\baselineskip]
       \procedure[bodylinesep=\belowLine]{$\KeyGen(1^\lambda)$}{
            \sk \sample \ZZ_p \\
            \pk = g^{\sk} \\
            \pcreturn (\sk, \pk)
       }
        \procedure[bodylinesep=\belowLine]{$\Sign(\sk, x, y)$}{\sig = \left(\frac{\groupHash(y)}{\groupHash(x)}\right)^\sk \\
            \pcreturn \sig
        }
        \procedure[bodylinesep=\belowLine]{$\Aggregate(\sig_{1}, \dots, \sig_{n})$}{\Sig = \prod_{i \in [n]} \sig_{i} \\
            \pcreturn \Sig
        }
    \end{pchstack}
    \begin{pchstack}[space=0.5\baselineskip]
\procedure[bodylinesep=\belowLine]{$\Verify((\PK_{1}, \dots, \PK_{n}), ((x_1,z_1),\dots,(x_n,z_n)), \Sig)$}{\text{for}\ i \in [n], \apk_{i} = \prod_{\pk \in \PK_{i}} \pk \\
            \text{Check}\ e(g, \Sig) = \prod_{i \in [n]} e\left(\apk_{i}, \frac{\groupHash(z_i)}{\groupHash(x_i)}\right)
        }
    \end{pchstack}
\end{pcvstack}
    \caption{
        Transitive multi-signature scheme.
\label{fig:transitive}
    }
\end{figure} The scheme based on GapTS-2 is shown in \cref{fig:transitive}.
GapTS-2 itself consists of similar key generation and signing algorithms as in \cref{fig:transitive} and only restricts $n=1$ and $|\PK_1|=1$ in \Verify{}.
It is shown in \cite{bellareNevenTransitive} that GapTS-2 is a secure standard transitive signature scheme under the one-more computational Diffie-Hellman assumption in the random oracle model.
Based on this, we now prove that the multi-signature is secure as well.

\begin{proposition}\label{prop:transitiveMulti}
    The scheme of \cref{fig:transitive} is a secure transitive multi-signature.
\end{proposition}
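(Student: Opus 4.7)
The plan is to reduce the secure aggregation property of the multi-signature scheme in \cref{fig:transitive} to the tuf-cma security of GapTS-2 established by Bellare and Neven, leveraging the knowledge-of-secret-key assumption that the preamble of \cref{sec:transitiveMulti} commits to. That is, given an adversary $\mathcal{A}$ that outputs a forgery $((\PK_1,\dots,\PK_n), ((x_1,z_1),\dots,(x_n,z_n)), \Sig)$ satisfying the three conditions of secure aggregation with non-negligible probability, I would build a reduction $\mathcal{B}$ that breaks single-signer transitive unforgeability of GapTS-2 for its challenge key $\pk^\ast$.

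First, $\mathcal{B}$ embeds $\pk^\ast$ as $\mathcal{A}$'s target key and handles $\mathcal{A}$'s signing queries on $\pk^\ast$ by forwarding them to its own GapTS-2 signing oracle; this lets $\mathcal{B}$ simulate exactly the graph of edges that would be present under $\mathcal{A}$'s view. For every other public key that $\mathcal{A}$ registers, the KOSK assumption guarantees $\mathcal{B}$ can extract the corresponding secret key $\sk$, so $\mathcal{B}$ can itself answer any signing query for those keys honestly. At the end of the game, $\mathcal{A}$ produces its forgery, and by hypothesis there is some index $j$ with $\pk^\ast \in \PK_j$ such that no path from $x_j$ to $z_j$ exists in the graph induced by the queries to $\pk^\ast$.

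The core step is \emph{stripping off} all known contributions from $\Sig$. Using multiplicativity of the verification equation and the homomorphic structure of GapTS-2, I would compute
\[
\Sig^\ast \;=\; \Sig \cdot \prod_{i \neq j}\left(\frac{\groupHash(z_i)}{\groupHash(x_i)}\right)^{-\sumKey_i} \cdot \left(\frac{\groupHash(z_j)}{\groupHash(x_j)}\right)^{-\sumKey_j'}
\]
where $\sumKey_i$ sums the secret keys of all signers in $\PK_i$ for $i \neq j$, and $\sumKey_j'$ sums the secret keys of all signers in $\PK_j \setminus \{\pk^\ast\}$ (all of which $\mathcal{B}$ knows by extraction). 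Plugging into the multi-signature verification equation shows $e(g, \Sig^\ast) = e(\pk^\ast, \groupHash(z_j)/\groupHash(x_j))$, so $\Sig^\ast$ is a valid GapTS-2 signature for $\pk^\ast$ on the edge $(x_j, z_j)$, and by construction that edge is absent from $\pk^\ast$'s signing graph. This contradicts the tuf-cma security of GapTS-2 proven in \cite{bellareNevenTransitive}.

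The main obstacle is the case where $\pk^\ast$ appears in more than one of the sets $\PK_1,\dots,\PK_n$: the reduction must pick a single $j$ for which the no-path condition holds, and then the stripping must remove the \emph{other} occurrences of $\pk^\ast$ too. I would handle this by having $\mathcal{B}$ guess up front which index $j$ will witness the forgery (incurring only a polynomial factor $1/n$ loss in advantage), and treating the additional occurrences of $\pk^\ast$ in $\PK_i$ for $i \neq j$ analogously to the other known keys, except that $\mathcal{B}$ cancels their contribution by querying its own oracle on the edge $(x_i, z_i)$; this is legitimate because these edges are unrelated to $(x_j, z_j)$ and do not by themselves create a path between $x_j$ and $z_j$ in the queried graph. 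A small additional care is needed to argue that the random-oracle programming underlying GapTS-2's proof composes cleanly with the multi-signer setting, but since $\groupHash$ is shared and $\mathcal{B}$ never needs to program it beyond what the underlying reduction already does, this is mostly bookkeeping.
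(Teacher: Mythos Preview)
Your core reduction is exactly the paper's: embed the tuf-cma challenge key as the target, forward signing queries, use the knowledge-of-secret-key assumption to extract every other secret key, and divide the corresponding factors out of $\Sig$ to isolate a single-key GapTS-2 signature on $(x_j,z_j)$.

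The paper does not address the case where $\pk^\ast$ lies in more than one $\PK_i$; you go beyond it by flagging this. But your fix is not sound as stated: querying the tuf-cma oracle on the extra edges $(x_i,z_i)$ adds them to the challenger's graph and can create a path between $x_j$ and $z_j$. For instance, take $n=2$, $\PK_1=\PK_2=\{\pk^\ast\}$, $(x_1,z_1)=(a,c)$, $(x_2,z_2)=(a,b)$, and a single prior oracle query on $(b,c)$: neither pair has a path initially, but if you guess $j=1$ and then query $(a,b)$ to cancel the second occurrence, you create a path from $a$ to $c$ through $b$, so your output on $(a,c)$ is no longer a valid tuf-cma forgery. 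For extra indices that \emph{do} already have a path one can recombine previously issued edge signatures without new queries, but when several no-path indices remain you are left with a product rather than a single-edge signature, so a bit more argument is needed than either your sketch or the paper's proof supplies. Your closing remark on random-oracle programming is correct: the reduction treats GapTS-2's tuf-cma game as a black box, so no additional programming arises.
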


\begin{proof}
    We follow similar steps as in the security proof for BLS multi-signatures from \cite[Section 4.3]{thresholdBLS}.
    That is, we assume the existence of an adversary $\advA$ against aggregation security and will derive a tuf-cma adversary $\advB$ (\cref{sec:proofDefs}) against GapTS-2.

    For simplicity and ease of presentation, we make a \emph{knowledge of secret key} assumption whereby the attacker must directly output the secret keys belonging to the (other) public keys in its output besides $\pk$.
    This models a (real-world) key registration model, where verification would additionally require each public key to have a separate \emph{proof of knowledge} of the corresponding secret key \cite[cf.][]{ry07}.
    Then, since an attacker $\advA$ would also have to prove knowledge of all the secret keys it used, by a standard definition of a proof of knowledge \cite{lindell03}, there must exist a polynomial-time extractor algorithm that would interact with $\advA$ to obtain the secrets.
    The explicit knowledge assumption abstracts this step. 

    Now, suppose $\advB$ received the public key $\pk$ in the tuf-cma experiment.
    $\advB$ would send all parameters to $\advA$, forward all of $\advA$'s queries to the tuf-cma challenger, and forward the responses to $\advA$.
    Eventually, $\advA$ would output $\PK_1, \dots, \PK_n, x_1, z_1 \dots, x_n, z_n$ and $\sig$.
    Say $j$ is such that $\pk \in \PK_j$.
    Let $\SK_1, \dots, \SK_n$ be the lists of secret keys corresponding to the public keys, which $\advB$ could obtain as per the knowledge of secret key assumption---except for the secret key $\sk^*$ corresponding to $\pk$.
    If $\sig$ passed verification, then it would hold that
    \[
    \sig = \prod_{i \in [n]} \left(\frac{\groupHash(z_i)}{\groupHash(x_i)}\right)^{\sum_{\sk \in \SK_i}}.
    \]
    Hence, $\advB$ divides $\sig$ by
    \[
    \left(\prod_{i \in [n]\setminus\{j\}} \left(\frac{\groupHash(z_i)}{\groupHash(x_i)}\right)^{\sum_{\sk \in \SK_i} \sk}\right) \cdot \left(\frac{\groupHash(z_j)}{\groupHash(x_j)}\right)^{\sum_{\sk \in \SK_j\setminus\{\sk^*\}} \sk}
    \]
    and obtains $\left(\groupHash(z_j)/\groupHash(x_j)\right)^{\sk^*}$.
    This constitutes a valid forgery for GapTS-2, which contradicts tuf-cma security and completes our proof.
\end{proof} \section{Light Client Security}\label{sec:proof}

\subheading{\cref{prop:lcSecurity}.}
\propLightclient{}

\begin{proof}

To formally map the security definition of \cref{sec:model} to the protocol description, let us first make explicit that the light client's local storage $\lcStateOld$ at a height $\startHeight$ consists of the blockchain state digest (that is, the digest of $\BCstateOld$), the blockchain epoch number $\startEpoch$ and the aggregated public key $\Nextkeys\up{\startEpoch-1}$ of the validator quorum $\quorum\up{\startEpoch}$.

Next, we convince ourselves that the validators, as per \cref{eq:signing}, are implicitly using exactly the transitive signature scheme for which we proved secure aggregation in \cref{sec:transitiveMulti}.
Specifically, the inputs to \Sign/ in epoch $i$ are $(i-1, i)$ or, if $i$ is a break point, $(i-1, (i, \Nextkeys\up{i}))$.
Likewise, the full nodes operations in Step \step{3} are equivalent to the \Aggregate{} algorithm and the light client performs, in Step \step{5}, a version of the \Verify{} algorithm where the quantity denoted as $\apk_1$ in \cref{sec:transitiveMulti} is already given as a precomputed value, namely $\Nextkeys\up{\startEpoch-1}$.
By the validators' honest majority assumption (since they are the ones who originally supply this precomputed value), it is secure to use this value.

Thus, we can rely on the security properties of the signature shown in \cref{prop:transitiveMulti}.
In particular, we are now ready to prove what our security definition requires:
that when verification passes, then the final state is a correct state except with negligible probability.

\begin{itemize}
    \item Based on the signature property proven in \cref{sec:transitiveMulti}, if the check in step \step{5} passes, then the light client can be sure that sufficiently many validators in $\validators\up{\startEpoch}$ have individually signed the chain of messages $\startEpoch, \startEpoch+1, \dots, \bp_1-1, (\bp_1, \Nextkeys\up{\bp_1})$---except with negligible probability.
    \item The same holds for the subsequent subperiods.
    \item Due to majority-honesty, if enough validators signed, then the aggregated public keys at the break points are all the correct ones.
    \item If, in particular, the final provided set of keys---that is, $\Nextkeys\up{\bp_\ell}$ or $\Nextkeys\up{\bp_{\ell-1}}$---is correct, then it can safely be used to verify the header $\Hdr$.
    \item Finally, the state read off of a correct block header must be a correct state.
\end{itemize}

Thus, for our check to pass even though the final state was a false one, one of our assumptions (round-wise majority-honesty of validators or siganture security) would have to be broken.
This completes the proof.
\end{proof} {\section{Data from \cref{fig:m,fig:committees}}

\newcommand{\dataWidth}{0.58cm}

\begin{table}[tbp!]
    \centering
    \caption{
        Distribution of update distances requested by light clients (\cref{fig:m}).
        For example, we find that 30\% light clients were already at the same height as our full node (first column).
        \label{tab:cum}
    }
    \begin{tabularx}{\linewidth}{l|p{\dataWidth}p{\dataWidth}p{\dataWidth}p{\dataWidth}p{\dataWidth}p{\dataWidth}p{\dataWidth}}
        \toprule
        Quantile & 30\% & 35\% & 40\% & 45\% & 50\% & 55\% & 60\% \\
        $m$ & 0 & 1 & 5 & 11 & 38 & 133 & 307 \\
        \midrule
        Quantile & 65\% & 70\% & 75\% & 80\% & 85\% & 90\% & 95\% \\
        $m$ & 1440 & 12477 & 53418 & 322349 & 834781 & 835259 & 835658 \\
        \bottomrule
    \end{tabularx}
\end{table}

For completeness, we show in \cref{tab:cum} the distribution of update distances ($m$) from all 19458 light client requests observed by our Bitcoin full node over the 2-week measurement period.
This corresponds to the same data which is plotted in \cref{fig:m} and discussed in \cref{sec:observations}.

\newcommand{\committeeWidth}{0.16cm}
\begin{table}[H]
    \centering
    \caption{
        Largest validator subsets which were active for consecutive epochs during our measurement period (\cref{fig:committees}).
        For example, after the first Polkadot committee change we observed, a set of 286 validators from the first epoch remained in the validator set in the next epoch.
        For each chain, the top and bottom rows are subdivisions of the 2-week period into two of length one week.
        For XRP Ledger and Cosmos, we only show a subsample of the 20000 and, respectively, 40 data points due to space constraints.
        \label{tab:quorumsize}
    }
    \begin{tabularx}{\linewidth}{X|p{\committeeWidth}p{\committeeWidth}p{\committeeWidth}p{\committeeWidth}p{\committeeWidth}p{\committeeWidth}p{\committeeWidth}p{\committeeWidth}p{\committeeWidth}p{\committeeWidth}p{\committeeWidth}}
        \toprule
        \multirow{2}{*}{Cosmos} & 180 & 180 & 180 & 180 & 168 & 168 & 168 & 168 & 167 & 167 & 167 \\
                                & 180 & 180 & 179 & 179 & 178 & 177 & 175 & 175 & 175 & 175 & 175 \\
        \midrule
        \multirow{2}{*}{Polkadot} & 286 & 281 & 276 & 273 & 272 & 272 & 266 & 265 & 247 & 247 & 245 \\
                                  & 278 & 273 & 270 & 268 & 267 & 267 & 266 & 265 & 263 & 263 & 288 \\
        \midrule
        \multirow{2}{*}{XRP Ledger} & 35 & 35 & 35 & 35 & 35 & 35 & 35 & 35 & 35 & 35 & 35 \\
                                    & 35 & 35 & 35 & 35 & 35 & 35 & 35 & 35 & 35 & 35 & 35 \\
        \bottomrule
    \end{tabularx}
\end{table}

In \cref{tab:quorumsize}, we additionally show the data regarding static validator subsets in Cosmos, Polkadot and XRP Ledger we collected from public nodes over a 2-week period (plotted in \cref{fig:committees} and discussed in \cref{sec:observations}).}

\end{document}